\newtheorem{lem}{Lemma} 
\newtheorem{thm}{Theorem}
\newtheorem{definition}{Definition}
\def\ln{{\rm ln}}
\def\mc{\mathcal}
\def\mb{\mathbf}
\def\mbb{\mathbb}
\def\ra{\rightarrow}
\begin{document}
\title{Asymptotic stability of stochastic LTV systems with applications to distributed dynamic fusion}
\author{Sam Safavi$^\dagger$,~\emph{Student Member,~IEEE}, and Usman A. Khan$^\dagger$,~\emph{Senior Member,~IEEE}\thanks{$^\dagger$The authors are with the Department of Electrical and Computer Engineering, Tufts University, 161 College Ave, Medford, MA 02155, {\texttt{\{sam248,khan\}@ece.tufts.edu}}. This work is partially supported by an NSF Career award: CCF \# 1350264.}}
\maketitle
\thispagestyle{empty}

\begin{abstract}
In this paper, we investigate asymptotic stability of linear time-varying  systems with (sub-) stochastic system matrices. Motivated by distributed dynamic fusion over networks of mobile agents, we impose some mild regularity conditions on the elements of time-varying system matrices. We provide sufficient conditions under which the asymptotic stability of the LTV system can be guaranteed. By introducing the notion of slices, as non-overlapping partitions of the sequence of systems matrices, we obtain stability conditions in terms of the slice lengths and some network parameters. In addition, we apply the LTV stability results to the distributed leader-follower algorithm, and show the corresponding convergence and steady-state. An illustrative example is also included to validate the effectiveness of our approach.
\end{abstract}

\section{introduction}\label{intro}
Stability of linear time-varying (LTV) systems has been a topic of significant interest in a wide range of disciplines including but not restricting to mathematical modeling and control of dynamical systems,~\cite{rosenbrook1963stability,1100529,1084637,Ilchmann1987157,tsakalis1993linear,DaCunha2005381,Phat2006343}. Discrete-time, LTV dynamics can be represented by the following model:
\begin{equation}\label{1}
{\bf{x}}(k+1)={{P}}_k{\bf{x}}(k)+B_k\mb{u}_k,\qquad k \geq 0,
\end{equation}
where~$\mb{x}_k\in\mbb{R}^n$ is the state vector,~$P_k$'s are the system matrices,~$B_k$'s are the input matrices, and~$\mb{u}_k\in\mbb{R}^s$ is the input vector. This model is particularly relevant to design and analysis of distributed fusion algorithms when the system matrices,~$P_k$'s, are (sub-) stochastic, i.e. they are non-negative and each row sums to at most~$1$. Examples include leader-follower algorithms,~\cite{tanner02,4200874}, consensus-based control algorithms,~\cite{5509836,5456181,4456762}, and sensor localization,~\cite{khan2009distributed,khan2010diland}. 

In contrast to the case when the system matrices,~$P_k$'s, are time-invariant, i.e.~$P_k=P,\forall k$, as in many studies related to the above examples, we are motivated by the scenarios when these system matrices are time-varying. The dynamic system matrices do not only model time-varying neighboring interactions, but, in addition, capture agent mobility in multi-agent networks. Consider, for example, the leader-follower algorithm,~\cite{tanner02,4200874}, where~$n$ \emph{sensors} update their states,~$\mb{x}_k$'s in Eq.~\eqref{1}, as a linear-convex combination of the neighboring states, and~$s=1$ \emph{anchor} keeps its (scalar) state,~$u_k$, fixed at all times. It is well-known that under mild conditions on network connectivity the sensor states converge to the anchor state. However, the neighboring interactions change over time if the sensors are  mobile. In the case of possibly random motion over the sensors, at each time~$k$, it is not guaranteed that a sensor can find any neighbor at all. If a sensor finds a set of neighbors to exchange information, none of these neighbors may be an anchor. We refer to the general class of such time-varying fusion algorithms over mobile agents as \emph{Distributed Dynamic Fusion~(DDF)}. In this context, we study the conditions required on the DDF system matrices such that the \emph{dynamic} fusion converges to (a linear combination of) the anchor state(s).

For linear time-invariant (LTI) systems, a necessary and sufficient condition for stability is that the \textit{spectral radius}, i.e. the absolute value of the largest eigenvalue, of the system matrix is subunit. A well-known result from the matrix theory is that if the (time-invariant) system matrix,~$P$, is irreducible \textit{and} sub-stochastic, sometimes referred to as \emph{uniformly sub-stochastic},~\cite{kolpakov,kolpakov_rus:83}, the spectral radius of~$P$ is strictly less than one and~$\lim_{k \rightarrow \infty}\mb{x}_k$ converges to zero. In contrast, the DDF algorithms over mobile agents result into a time-varying system, Eq.~\eqref{1}, where a system matrix,~$P_k$, at any time~$k$ is non-negative, and can be: (i) identity if no sensor is able to update its state; (ii) stochastic if the updating sensor divides the total weight of~$1$ among the sensors in its neighborhood; or, (iii) sub-stochastic if the total weight of~$1$ is divided among both sensors and anchors. In addition, it can be verified that in DDF algorithms, the resulting LTV system may be such that the spectral radius,~$\rho(P_k)$, of the system matrices follow~$\rho(P_k)=1,\forall k$. This is, for example, when only a few sensors update and the remaining stick to their past states. 
 
Asymptotic stability for LTV systems may be characterized by the \textit{joint spectral radius} of the associated family of system matrices. Given a finite set of matrices,~$\mathcal{M}=\{{A}_{1},\ldots,{A}_{m}\}$, the joint spectral radius of the set~$\mathcal{M}$, was introduced by Rota and Strang,~\cite{rota1960note}, as a generalization of the classical notion of spectral radius, with the following definition:
\begin{equation*}\label{2}
\rho(\mathcal{M}):=\lim_{k \rightarrow \infty} \max\limits_{A \in {\mathcal{M}}_k } {\Vert {A}\Vert}^{\frac{1}{k}},
\end{equation*}
in which~${\mathcal{M}}_k$ is the set of all possible products of the length~$k \geq 1$, i.e.
\begin{equation*}
{\mathcal{M}}_k =\{A_{i_1} A_{i_2} \ldots A_{i_k} \}: 1 \leq i_j \leq m,~j=1,\ldots,k.
\end{equation*}
Joint spectral radius (JSR) is independent of the choice of norm, and represents the maximum growth rate that can be achieved by forming arbitrary long products of the matrices taken from the set~${\mathcal{M}}$. It turns out that the asymptotic stability of the LTV systems, with system matrices taken from the set~$\mathcal{M}$, is guaranteed,~\cite{parrilo2008approximation}, if and only if
\begin{equation*}
\rho(\mathcal{M}) < 1.
\end{equation*}
  
Although the JSR characterizes the stability of LTV systems, its computation is NP-hard,~\cite{tsitsiklis1997lyapunov}, and the determination of a strict bound is undecidable,~\cite{blondel2000boundedness}. Naturally, much of the existing literature has focused on JSR approximations,~\cite{parrilo2008approximation,gripenberg1996computing,blondel2000boundedness,jungers2009joint,blondel2005computationally,tsitsiklis1997lyapunov,qu2005products,touri2011product}. For example, Ref.~\cite{blondel2005computationally} studies lifting techniques  to approximate the JSR of a set of matrices. The main idea is to build a lifted set with a larger number of matrices, or a set of matrices with higher dimensions, such that the relation between the JSR of the new set and the original set is known. Lifting techniques provide better bounds at the price of a higher computational cost. In~\cite{parrilo2008approximation}, a sum of squares programming technique is used  to approximate the JSR of a set of matrices; a bound on the quality of the approximation is also provided, which is independent of the number of matrices. Stability of LTV systems is also closely related to the convergence of infinite products of matrices. Of particular interest is the special case of the (infinite) product of non-negative and/or (sub-) stochastic matrices, see~\cite{guu2003convergence,daubechies1992sets,bru1994convergence,beyn1997infinite,hartfiel1974infinite,pullman1966infinite,kochkarev1995continuous,Elsner1997133}. In addition to non-negativity and sub-stochasticity, the majority of these works set other restrictions, such as irreducibility or bounds on the row sum on each matrix in the set. 

The main contributions of this paper are as follows. \emph{Design}: we provide a set of conditions on the elements of the system matrices under which the asymptotic stability of the corresponding LTV system can be guaranteed. \emph{Analysis}: we propose a general framework to determine the stability of an (infinite) product of (sub-) stochastic matrices. Our approach does not require either the computation or an approximation of the JSR. Instead, we partition the infinite set of system matrices (stochastic, sub-stochastic, or identity) into non-overlapping slices--a slice is defined as the smallest product of (consecutive) system matrices such that: (i) every row in a slice is strictly less than one; and, (ii) the slices cover the entire sequence of system matrices. Under the conditions established in the design, we subsequently show that the infinity norm of each slice is subunit (recall that in the DDF setup, infinity norm of each system matrix is one). Finally, in order to establish the relevance to the fusion \textit{applications} of interest, we use the theoretical results to derive the convergence and steady-state of a dynamic leader-follower algorithm. 

An important aspect of our analysis lies in the study of slice lengths. First, we show that longer slices may have an infinity norm that is closer to one as compared to shorter slices. Clearly, if one can show that each slice norm is subunit (with a uniform upper bound of~$<1$) then one further has to guarantee an infinite number of such slices to ensure stability. The aforementioned argument naturally requires slices of finite length, as finite slices covering infinite (system) matrices lead to an infinite number of slices. An avid reader may note that guaranteeing a sharp upper bound on the length of every slice may not be possible for certain network configurations. To address such configurations, we characterize the rate at which the slices (not necessarily in an order) grow large such that the LTV stability is not disturbed. In other words, a longer slice may capture a slow information propagation in the network; characterizing the aforementioned growth is equivalent to deriving the rate at which the information propagation may deteriorate in a network such that the fusion is still achievable. 

The rest of this paper is organized as follows. We formulate the problem in Section~\ref{PF}, while Section~\ref{IP} studies the convergence of an infinite product of (sub-) stochastic matrices. Stability of discrete-time LTV systems with (sub-) stochastic system matrices is studied in Section~\ref{stability}. We provide applications to distributed dynamic fusion in Section~\ref{app} and illustrations of the results in Section~\ref{example}. Finally, Section~\ref{conc} concludes the paper.

\section{Problem formulation}\label{PF}
In this paper, we study the \textit{asymptotic stability} of the following Linear Time-Varying (LTV) dynamics:
\begin{equation}\label{eq2}
{\bf{x}}(k+1)={{P}}_k{\bf{x}}(k)+{{B}}_k{\bf{u}}(k),\qquad k \geq 0,
\end{equation}
where~${\mb{x}}(k)\in\mbb{R}^n$ is the state vector,~${{P}}_k\in\mbb{R}^{n \times n}$ is the time-varying system matrix,~${B}_k\in\mbb{R}^{n \times s}$ is the time-varying input matrix,~${\bf{u}}(k)\in\mbb{R}^s$ is the input vector, and~$k$ is the discrete-time index. We consider the system matrix,~${P}_k$ at each~$k$, to be non-negative and either \emph{sub-stochastic}, \emph{stochastic}, or \emph{identity}, along with some  conditions on its elements. The input matrix,~${B}_k$ at each~$k$, may be arbitrary as long as some regularity conditions are satisfied. These regularity conditions on the system matrices,~~${P}_k$'s and~~${B}_k$'s, are collected in the Assumptions {\bf A0--A2} in the following. 

In this paper, we are interested in deriving the conditions on the corresponding system matrices under which the LTV dynamics in Eq.~\eqref{eq2} forget the initial condition,~$\mb{x}(0)$, and converge to some function of the input vector,~$\mb{u}_k$. The motivation behind this investigation can be cast in the context of distributed fusion over dynamic graphs that we introduce in the following. 

\subsection{Distributed Dynamic Fusion} 
Consider a network of~$n+s$ mobile nodes moving arbitrarily in a (finite) region of interest, where~$n$ mobile sensors implement a distributed algorithm to obtain some relevant function of~$s$ (mobile) anchors; examples include the leader-follower setup,~\cite{tanner02,4200874}, and sensor localization,~\cite{khan2009distributed,khan2010diland}. The sensors may be thought of as mobile agents that collect information from the anchors and disseminate within the sensor network. Each node may have restricted mobility in its respective region and thus many sensors may not be able to directly connect to the anchors. Since the motion of each node is arbitrary, the network configuration at any time~$k$ is completely unpredictable. It is further likely that at many time instants, no node has any neighbor in its communication radius. 

Formally, sensors, in the set~$\Omega$, are the nodes in the graph that update their states,~$x_i(k)\in\mbb{R}, i=1,\ldots,n$, as a linear-convex function of the neighboring nodes; while anchors, in the set~$\kappa$, are the nodes that inject information,~$u_j(k)\in\mbb{R},j=1,\ldots,s$, in the network. Let~$\mc{N}_i(k)$ denote the set of neighbors (not including sensor~$i$) of sensor~$i$ according to the underlying graph at time~$k$, with~$\mc{D}_i(k)\triangleq\{i\}\cup\mc{N}_i(k)$. We assume that at each time~$k$, only one sensor, say~$i$, updates its state\footnote{Although multiple sensors may update their states at each iteration, without loss of generality, we assume that at most one sensor may update.},~$x_i(k)$. Since the underlying graph is dynamic, the updating sensor~$i$ implements one of the following updates:
\begin{enumerate}[(i)]
\item No neighbors:
\begin{eqnarray}\label{U1}
x_i(k+1) = x_i(k),\qquad\mc{N}_i(k)=\emptyset.
\end{eqnarray}
\item No neighboring anchor,~$\mc{N}_i(k)\cap\kappa=\emptyset$:
\begin{eqnarray}\label{U2}
x_i(k+1) = \sum\limits_{l\in{\mc{D}_{i}(k)}}({P}_{k})_{i,l}{x}_{l}(k). 
\end{eqnarray}
\item At least one anchor as a neighbor:
\begin{eqnarray}\nonumber
x_i(k+1) &=& 
\sum\limits_{l\in{\mc{D}_{i}(k)\cap\Omega}}({P}_{k})_{i,l}{x}_{l}(k)\\ \label{U3}
&+& \sum\limits_{j\in{\mc{D}_i(k)}\cap\kappa}({B}_{k})_{i,j}{u}_{j}(k),
\end{eqnarray}
with~$\mc{N}_i(k)\cap\kappa\neq\emptyset$.
\end{enumerate}
At every other (non-updating) sensor,~$l\neq i$, we have
\begin{eqnarray}\label{U4}
x_l(k+1) = x_l(k).
\end{eqnarray}

\subsection{Assumptions} 
Let~${P}_k=({P}_{k})_{i,l}$, and~${B}_k=({B}_{k})_{i,j}$, we now enlist the assumptions:

\noindent {\bf A0}: When the updating sensor,~$i$, has no anchor as a neighbor, the update in Eq.~\eqref{U2} is linear-convex, i.e.
\begin{eqnarray}
\sum\limits_{l\in{\mc{D}_{i}(k)}\cap\Omega}({P}_{k})_{i,l} = 1,
\end{eqnarray}
resulting in a (row) stochastic system matrix,~${P}_k$.

\noindent {\bf A1}: When the updating sensor,~$i$, has no anchor but at least one sensor as a neighbor, the weight it assigns to each neighbor (including the self-weight) is such that
\begin{eqnarray}\label{bnd1}
0 < \beta_1 \leq ({P}_{k})_{i,l} < 1,\qquad \forall l \in {\mathcal{D}}_i(k),\beta_1\in\mbb{R},
\end{eqnarray}

\noindent {\bf A2}: When the updating sensor updates with an anchor, the update, Eq.~\eqref{U3}, over the sensors,~$\mc{D}_i(k)\cap\Omega$, satisfies
\begin{eqnarray}\label{bnd2}
\sum\limits_{l\in{\mc{D}_{i}(k)}\cap\Omega}({P}_{k})_{i,l} \leq\beta_2 < 1,
\end{eqnarray}
resulting in a sub-stochastic system matrix,~${P}_k$. Also note that the update over the anchors,~$\mc{N}_i(k)\cap\kappa$, in Eq.~\eqref{U3}, follows
\begin{eqnarray}\label{bnd3}
({B}_{k})_{i,j}\geq\alpha>0,\qquad\forall j\in\mc{N}_i(k)\cap\kappa.
\end{eqnarray}
If, in addition, we enforce~$\sum_l ({P}_{k})_{i,l}+\sum_j ({B}_{k})_{i,j}=1$, as it is assumed in leader-follower,~\cite{tanner02,4200874}, or sensor localization,~\cite{khan2009distributed,khan2010diland}, Eq.~\eqref{bnd3} naturally leads to the bound in Eq.~\eqref{bnd2}.

Clearly, which of the four updates in Eqs.~\eqref{U1}--\eqref{U4} is applied by the updating sensor,~$i$, depends on being able to satisfy the corresponding assumptions ({\bf A0--A2}), \emph{in addition} to the neighborhood configuration. Indeed, letting 
\begin{eqnarray*}
\mb{x}(k) &=& \left[x_1(k),\ldots,x_n(k)\right]^\top,\\
\mb{u}(k) &=& \left[u_1(k),\ldots,u_m(k)\right]^\top,
\end{eqnarray*} 
result into the LTV system in Eq.~\eqref{eq2}. Clearly, the time-varying system matrices,~${P}_k$, are either sub-stochastic, stochastic, or identity, depending on the nature of the update. 

{\bf Remarks:} It is meaningful to comment on the assumptions made above. Non-negativity and stochasticity are standard in the literature concerning relevant iterative algorithms and multi-agent fusion, see e.g.~\cite{5509836,5456181,4456762,khan2009distributed}. When there is a neighboring anchor, Eq.~\eqref{bnd2} provides an \emph{upper bound on unreliability} thus restricting the amount of unreliable information added in the network by a sensor. Eq.~\eqref{bnd3}, on the other hand, can be viewed as a \emph{lower bound on reliability}; it ensures that whenever an anchor is included in the update, a certain amount of information is always contributed by the anchor. An avid reader may note that Eq.~\eqref{bnd3} guarantees that the following does not occur:~$({B}_{k_1})_{i,j}\ra0$, where~$k_1\geq0$ is a subsequence within~$k$ denoting the instants when Eq.~\eqref{U3} is implemented. Similarly, Eqs.~\eqref{bnd1} and~\eqref{bnd2} ensure that no sensor is assigned a weight arbitrarily close to~$1$ and thus no sensor may be entrusted with the role of an anchor. Note also that Eq.~\eqref{bnd1} naturally leads to an upper bound on the neighboring sensor weight, i.e.~$(P_k)_{i,l\neq i}\leq1-\beta_1$, because~$\mc{D}_i(k)$ always includes~$\{i\}$. Also when there is no neighboring anchor, Eq.~\eqref{bnd1} guarantees that sensors do not completely forget their past information by putting a non-zero self-weight on their own previous states. Finally, we point out that the bounds in Eqs.~\eqref{bnd1}--\eqref{bnd3}, are naturally satisfied by LTI dynamics:~$\mb{x}(k+1)=P\mb{x}(k) + B\mb{u}(k)$, with non-negative matrices; a topic well-studied in the context of iterative algorithms,~\cite{tsit_book,plemmons:79}, and multi-agent fusion. 

\section{Infinite product of (sub-) stochastic matrices}\label{IP}
In this section, we study the convergence of 
\begin{equation}\label{eq11}
\lim_{k \rightarrow \infty}{{P}}_{k}{{P}}_{k-1} \ldots {{P}}_0,
\end{equation}
where~${{P}}_{k}$ is the system matrix at time~$k$, as defined in Section~\ref{PF}. Since multiplication with the identity matrix has no effect on the convergence of the sequence, in the rest of the paper we only consider the updates, in which at least one sensor is able to find and exchange information with some neighbors, i.e. ${{P}}_{k} \neq I_n,~\forall k$. We are interested in establishing the stability properties of this infinite product. Studying the joint spectral radius is prone to many challenges as described in Section~\ref{intro}, and we choose the infinity norm to study the convergence conditions. The infinity norm,~$\|M\|_\infty$, of a square matrix,~$M$, is defined as the maximum of the absolute row sums. Clearly, the infinity norm of~${{P}}_k$ is one for all~$k$ since each system matrix has at most one sub-stochastic row. 

To establish a subunit infinity norm, we divide the system matrices into non-overlapping \textit{slices} and show that each slice has an infinity norm strictly less than one; the entire chain of system matrices is covered by these non-overlapping slices. Let one of the slices be denoted by~$M$ with length~$|M|$ and, without loss of generality, index the matrices within~$M$ as
\begin{eqnarray}\label{M}
{{M}} = P_{|{{M}}|}P_{|{{M}}|-1}P_{|{{M}}|-2}\ldots P_{3}P_{2}P_{1}.
\end{eqnarray}

\noindent Using slice notation, we can introduce a new discrete-time index,~$t$, which allows us to study the following
\begin{eqnarray}
\lim_{t \rightarrow \infty}{{M}}_{t}{{M}}_{t-1}\ldots{{M}_{0}},
\end{eqnarray}
instead of Eq.~\eqref{eq11}, note that~$ k > t$. 

We define a system matrix,~$P_k$, as a \textit{success} if it decreases the row sum of some row in~$P_k$, which was stochastic before this successful update. Each success, thus, adds a \emph{new} sub-stochastic row to a slice, and~$n$ such successful updates are required to complete a slice. In this argument, we assume that a row that becomes sub-stochastic remains sub-stochastic, which is not in true in general, after successive multiplication with stochastic or sub-stochastic matrices,~$\left(\ldots P_{k+2}P_{k+1}\right)$. Thus, we will derive the explicit conditions under which the sub-stochasticity of a row is preserved. Before we proceed with our main result we provide the following lemmas:

\begin{lem}\label{lem1}
For the infinity norm to be less than one, each slice has to contain at least one sub-stochastic update.
\end{lem}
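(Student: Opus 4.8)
The plan is to prove Lemma~\ref{lem1} by contradiction, using the characterization of slices and the structure of the system matrices established in Section~\ref{PF}. Recall that each~$P_k$ is nonnegative and has every row sum equal to~$1$ \emph{except possibly} the single row corresponding to the updating sensor when it performs a sub-stochastic update (Eq.~\eqref{U3} under assumption~\textbf{A2}). By contrast, an update of type~\eqref{U2} (assumption~\textbf{A0}) is linear-convex and yields a fully stochastic matrix, and identity matrices have been excluded. Thus every~$P_k$ in a slice is either stochastic or sub-stochastic, and~$\|P_k\|_\infty = 1$ in every case.

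First I would suppose, for contradiction, that a slice~$M = P_{|M|}P_{|M|-1}\cdots P_1$ contains \emph{no} sub-stochastic update, i.e.\ every factor~$P_j$ is (row) stochastic. The key structural fact is that the product of row-stochastic matrices is again row-stochastic: if~$A$ and~$C$ each satisfy~$A\mathbf{1}=\mathbf{1}$ and~$C\mathbf{1}=\mathbf{1}$ (where~$\mathbf{1}$ is the all-ones vector), then~$(AC)\mathbf{1}=A(C\mathbf{1})=A\mathbf{1}=\mathbf{1}$, so~$AC$ is stochastic as well. By induction on the number of factors, the entire slice~$M$ would then satisfy~$M\mathbf{1}=\mathbf{1}$, meaning every row of~$M$ sums to exactly~$1$.

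Next I would translate this back to the infinity norm. Since~$M$ is nonnegative (a product of nonnegative matrices is nonnegative), the absolute row sums equal the ordinary row sums, so~$\|M\|_\infty = \max_i \sum_l (M)_{i,l} = 1$. This directly contradicts the hypothesis that~$\|M\|_\infty < 1$. Hence at least one factor in the slice must fail to be stochastic; by the classification above, that factor must be a sub-stochastic update, which is exactly the claim.

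I do not expect a serious obstacle here, as the lemma is essentially the observation that strict row-sum decrease can only be introduced by a sub-stochastic factor. The one point requiring a little care is making sure the nonnegativity is invoked correctly so that~$\|M\|_\infty$ coincides with the maximum row sum (rather than the maximum \emph{absolute} row sum), and confirming that the excluded identity matrices and the type-\eqref{U2} updates are all genuinely stochastic so that the dichotomy ``stochastic or sub-stochastic'' is exhaustive within a slice. The converse direction (that one sub-stochastic update might be \emph{necessary but not sufficient}) is not asserted by this lemma and is presumably handled by the subsequent results on row-sum preservation, so I would not address it here.
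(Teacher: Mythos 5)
Your proof is correct and follows essentially the same route as the paper's: a slice with no sub-stochastic factor is a product of row-stochastic matrices, which is itself row-stochastic (the paper phrases this as stochastic matrices being closed under multiplication), and hence has infinity norm exactly~$1$. Your explicit computation~$(AC)\mathbf{1}=A(C\mathbf{1})=\mathbf{1}$ and the remark on nonnegativity are a slightly more careful rendering of the same one-line argument.
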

\begin{proof}
Since any set of stochastic matrices form a \textit{group} under multiplication \cite{anton2010elementary}, a slice without a sub-stochastic update will be a stochastic matrix whose infinity norm is~$1$.
\end{proof}

We now motivate the slice construction as follows. Partition the rows in an arbitrary~$P_k$ into two distinct sets: set~$\mathcal{I}$ contains all sub-stochastic rows, and the remaining (stochastic) rows form the other set,~$\mathcal{U}$. We initiate each slice with the first success,~$\vert \mathcal{I} \vert=1,~\vert \mathcal{U} \vert=n-1$, and terminate it \textit{after} the~$n$th success,~$\vert \mathcal{I} \vert=n,~\vert \mathcal{U} \vert=0$, when each row becomes sub-stochastic. Between the~$n$th success in the current slice, say~$M_j$, and the first success in the next slice,~$M_{j+1}$, all we can have are stochastic or sub-stochastic matrices that must preserve the sub-stochasticity of each row. See Fig.~\ref{f0} for the slice representation, where the rightmost system matrices (encircled in Fig.~\ref{f0}) of each slice, i.e.~${P}_0,~{P}_{m_0},~\ldots,~{P}_{m_{j-1}} \ldots$, are sub-stochastic. The~$j$th slice length may be defined as
\begin{equation*}
\vert {M}_{j} \vert = m_j - m_{j-1},\qquad m_{-1}=0,
\end{equation*}
and slice lengths are not necessarily equal. 
\begin{figure}[!h]
\centering
\includegraphics[width=2.75in]{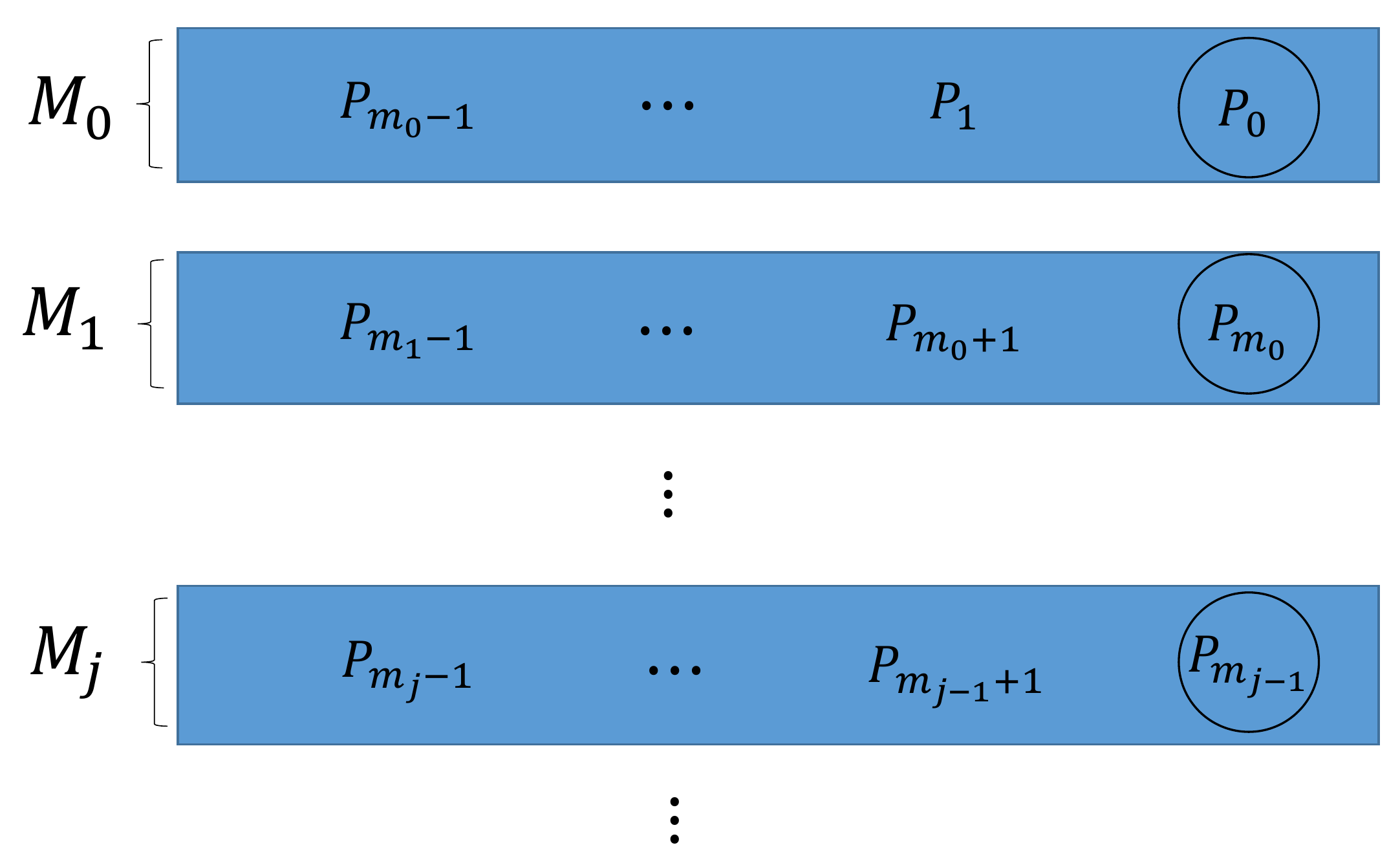}
\caption{Slice representation}
\label{f0}
\end{figure}

In the next lemma, we show how a stochastic row can become sub-stochastic, in a slice,~${{M}}$. We index~$P_k$'s in a slice,~$M$, by~$P_{\vert {M} \vert}, \ldots, P_2, P_1$ to simplify notation, and define the product of all system matrices up to time~$k$ in a slice as
\begin{eqnarray}\label{14}
J_k=P_k P_{k-1} \ldots P_2 P_1, \qquad 0 < k \leq |{{M}}|.
\end{eqnarray}

\begin{lem}\label{lem2}
Suppose the~$i$-th row of~$J_k$ is stochastic at index~$k$ of a given slice,~${{M}}$, and~$P_{k+1}$ is the next system matrix. Row~$i$ in~$J_{k+1}$ can become sub-stochastic by either:\\
(i) a sub-stochastic update at the~$i$-th row of~$P_{k+1}$; or,\\
(ii) a stochastic update at the~$i$-th row of~$P_{k+1}$, such that 
\begin{eqnarray*}
\exists j \in \mathcal{I}_k,\mbox{ with }(P_{k+1})_{ij} \neq 0,
\end{eqnarray*}
where~$\mathcal{I}_k$ is the set of sub-stochastic rows in~$J_k$.
\end{lem}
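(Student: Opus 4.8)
The plan is to analyze the row sum of row $i$ of $J_{k+1}$ by writing $J_{k+1}=P_{k+1}J_k$ and expanding directly. Denoting by $s_j=\sum_m (J_k)_{jm}$ the row sum of the $j$-th row of $J_k$, a short computation gives
\begin{equation*}
\sum_m (J_{k+1})_{im}=\sum_m\sum_j (P_{k+1})_{ij}(J_k)_{jm}=\sum_j (P_{k+1})_{ij}\,s_j,
\end{equation*}
so the row sum of row $i$ is exactly the combination of the $s_j$'s weighted by the entries of the $i$-th row of $P_{k+1}$. The first ingredient I would establish is that every $P_k$ is non-negative with all row sums at most one, so that products of such matrices inherit row sums at most one; hence $0\le s_j\le 1$ for every $j$, with $s_j=1$ precisely when $j\notin\mathcal{I}_k$ and $s_j<1$ precisely when $j\in\mathcal{I}_k$, by the definition of the sub-stochastic index set $\mathcal{I}_k$.

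With this identity in hand, I would split on whether the $i$-th row of $P_{k+1}$ is sub-stochastic or stochastic. In case (i), when the $i$-th row of $P_{k+1}$ is a sub-stochastic update so that $\sum_j (P_{k+1})_{ij}<1$, the bound $s_j\le 1$ immediately yields $\sum_j (P_{k+1})_{ij}s_j\le\sum_j (P_{k+1})_{ij}<1$, so row $i$ of $J_{k+1}$ becomes sub-stochastic regardless of the $s_j$; this mechanism is sufficient on its own. In case (ii), when the $i$-th row of $P_{k+1}$ is stochastic so that $\sum_j (P_{k+1})_{ij}=1$, the row sum $\sum_j (P_{k+1})_{ij}s_j$ is a genuine convex combination of numbers in $[0,1]$, hence is at most one, and equals one if and only if $s_j=1$ for every $j$ in the support of the $i$-th row of $P_{k+1}$. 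Consequently the row becomes strictly sub-stochastic exactly when some $j$ with $(P_{k+1})_{ij}\neq0$ satisfies $s_j<1$, i.e. when $\exists\,j\in\mathcal{I}_k$ with $(P_{k+1})_{ij}\neq0$, which is the stated condition.

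Finally, I would record the converse to confirm that (i) and (ii) exhaust the mechanisms: if the $i$-th row of $P_{k+1}$ is stochastic and its support is disjoint from $\mathcal{I}_k$, the convex-combination computation yields a row sum of exactly one, so row $i$ stays stochastic. I would also dispose of the degenerate sub-case in which the $i$-th row of $P_{k+1}$ is the identity (non-updating) row $\mathbf{e}_i^\top$: its support is $\{i\}$, and since row $i$ of $J_k$ is assumed stochastic we have $i\notin\mathcal{I}_k$, so this never triggers the transition, consistent with (ii). The argument is essentially a one-line computation once the row-sum identity is written down; the only points requiring care are the equality condition in the convex combination of case (ii) and the preliminary observation that the row sums of $J_k$ never exceed one, which must be in place before the case analysis.
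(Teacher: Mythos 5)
Your proposal is correct and follows essentially the same route as the paper: expand the row sum of $J_{k+1}=P_{k+1}J_k$ as a weighted combination $\sum_j (P_{k+1})_{ij}s_j$ of the row sums of $J_k$, then split on whether the $i$-th row of $P_{k+1}$ is sub-stochastic (row sum $<1$ forces the product row sum below one) or stochastic (a convex combination that drops below one exactly when some weight lands on a row in $\mathcal{I}_k$). Your added remarks on the converse and the identity-row sub-case are consistent with, though not explicitly spelled out in, the paper's argument.
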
 
\begin{proof}
For the sake of simplicity, let~$P \triangleq P_{k+1}, J \triangleq J_k$, in the following. Updating the~$i$th row at index~$k+1$ leads to
\begin{eqnarray}\label{16}
P_{k+1}\triangleq P=
\left[
\begin{array}{cccccc}
I_{1:i-1}\\
(P)_{i,1} & (P)_{i,2} && \ldots && (P)_{i,n}\\
&&&&& I_{i+1:n}
\end{array}
\right],
\end{eqnarray}
where~$I_n$ is~$n \times n$ identity matrix;~$i$th row after this update is
\begin{eqnarray*}
(PJ)_{i,j} = \sum_{m=1}^n(P)_{i,m}(J)_{m,j},
\end{eqnarray*}
where~$(PJ)_{i,j}$ is the~$(i,j)$th element of~$PJ$, and the~$i$th row sum becomes
\begin{eqnarray}\label{ui}
\sum_{j}(PJ)_{i,j}&=&\sum_{j}\sum_{m=1}^n(P)_{i,m}(J)_{m,j},\\
&=& \sum_{j} \Big((P)_{i,1}(J)_{1,j}+ \ldots + (P)_{i,n}(J)_{n,j} \Big),\nonumber\\
&=&(P)_{i,1}\underbrace{\sum_{j}(J)_{1,j}}_{\leq 1}+ \ldots + (P)_{i,n}\underbrace{\sum_{j}(J)_{n,j}}_{\leq 1}\nonumber.
\end{eqnarray}
Thus, we have
\begin{eqnarray}\label{uii}
\sum_{j}(PJ)_{i,j}\leq (P)_{i,1}+ \ldots + (P)_{i,n}.
\end{eqnarray}
Let us first consider \textit{case (i)} where the~$i$th row of~$P$ is sub-stochastic. From Eq.~\eqref{uii} and Assumption {\bf A2}, we have
\begin{eqnarray}\label{19}
\sum_{j}(PJ)_{i,j}\leq \sum_{j=1}^{n}(P)_{i,j} \leq\beta_2 < 1.
\end{eqnarray}
Therefore, the~$i$th row becomes sub-stochastic after a sub-stochastic update at row~$i$. 

We now consider \textit{case~(ii)} where the~$i$th row of~$P$ is stochastic, i.e.~$\sum_{m=1}^{n}{(P)}_{i,m} = 1$. In this case,~$\sum_{j}(PJ)_{i,j}$ is a linear-convex combination of the row sums of~$J$, which is strictly less than one, if and only if~$J$ has at least one sub-stochastic row, say~${m}^{\prime}$, such that~${(P)}_{i,{m}^{\prime}} \neq 0$, i.e.
\begin{eqnarray}\label{20}
\sum_{j}(PJ)_{i,j}=\underbrace{(P)_{i,{m}^{\prime}}}_{\neq 0}\underbrace{(J)_{{m}^{\prime},j}}_{<1}+\sum_{m\neq{m}^{\prime}}(P)_{i,m}\underbrace{(J)_{m,j}}_{\leq1}.
\end{eqnarray}
So~$\sum_{j}(PJ)_{i,j} <1$ and the lemma follows.
\end{proof}

In the next lemma, we show that sub-stochasticity is preserved for each sub-stochastic row within a slice.
\begin{lem}\label{lem3}
With Assumptions {\bf{A0-A2}}, a sub-stochastic row, say~$i$, remains sub-stochastic throughout a slice.
\end{lem}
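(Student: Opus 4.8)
The plan is to fix the first index within the slice at which row~$i$ becomes sub-stochastic, and then show by induction on the slice index~$k$ that the row sum of row~$i$ in~$J_k$ stays strictly below one for every subsequent~$k\le|{{M}}|$. Write~$s_i(k):=\sum_j (J_k)_{i,j}$ for the row sum of interest; the base of the induction is exactly the step furnished by Lemma~\ref{lem2}, which gives~$s_i(k_0)<1$ at the index~$k_0$ where row~$i$ first turns sub-stochastic. For the inductive step I would pass from~$J_k$ to~$J_{k+1}=P_{k+1}J_k$ and bound~$s_i(k+1)$ under the standing hypothesis~$s_i(k)<1$.

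The decomposition I would use exploits the fact that at most one sensor updates at each time, so~$P_{k+1}$ is the identity except in a single row (cf.\ Eq.~\eqref{16}). This splits the step into three cases. If the updating sensor is not~$i$, then row~$i$ of~$P_{k+1}$ is the standard basis vector~$\mb{e}_i^\top$, hence row~$i$ of~$J_{k+1}$ coincides with row~$i$ of~$J_k$ and~$s_i(k+1)=s_i(k)<1$ trivially. If sensor~$i$ performs a sub-stochastic update, then the bound derived from Assumption~{\bf A2}, exactly as in Eq.~\eqref{19}, gives~$s_i(k+1)\le\beta_2<1$ independently of the prior value~$s_i(k)$. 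The only case that requires care is a stochastic update at row~$i$.

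In that remaining case I would reuse the expansion from the proof of Lemma~\ref{lem2}: since row~$i$ of~$P_{k+1}$ is stochastic,
\begin{equation*}
s_i(k+1)=\sum_{m=1}^n (P_{k+1})_{i,m}\,s_m(k)
\end{equation*}
is a convex combination of the row sums~$s_m(k)$ of~$J_k$, each of which is at most one. The point is that this combination is \emph{strictly} below one, and here Assumption~{\bf A1} does the essential work: because~$\mc{D}_i(k)$ always contains~$\{i\}$, the self-weight satisfies~$(P_{k+1})_{i,i}\ge\beta_1>0$, so the already sub-stochastic term~$s_i(k)<1$ enters the combination with strictly positive weight. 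A convex combination of numbers~$\le1$ in which at least one term~$<1$ carries positive weight is itself~$<1$, giving~$s_i(k+1)<1$ and closing the induction.

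I expect the stochastic-update case to be the main obstacle, in the sense that it is the only place where the conclusion is not automatic: multiplying by a stochastic row can in principle restore a row sum to one by averaging a sub-stochastic row against stochastic ones. What prevents this is precisely the positive self-weight guaranteed by~{\bf A1}, which keeps row~$i$'s own (sub-stochastic) mass in play at every update; dropping the lower bound~$\beta_1$ would break the argument, since a vanishing self-weight could let row~$i$ be overwritten entirely by stochastic rows. This also clarifies why~{\bf A1} is imposed as a genuine two-sided bound rather than a mere positivity condition.
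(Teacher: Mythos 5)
Your proof is correct and follows essentially the same route as the paper's: both arguments reduce the stochastic-update case to the observation that the row sum of~$J_{k+1}$ is a convex combination of row sums of~$J_k$ in which the self-weight~$(P_{k+1})_{i,i}\ge\beta_1>0$ (Assumption~{\bf A1}) keeps the sub-stochastic term~$s_i(k)<1$ in play, and both dispatch the sub-stochastic-update case directly via the~$\beta_2$ bound of Assumption~{\bf A2}. Your explicit induction and the separate trivial case where sensor~$i$ is not the updating sensor are organizational refinements of the paper's argument rather than a different approach.
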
 
\begin{proof}
We use the notation of Lemma~\ref{lem2} on~$J$,~$P$, and Eq.~\eqref{16}, and rewrite Eq.~\eqref{ui} as
\begin{eqnarray}
\sum_{j}(PJ)_{i,j}&=&\sum_{j}\sum_{m=1}^n(P)_{i,m}(J)_{m,j},\nonumber\\
&=& \sum_{m \in \mathcal{I}} \Bigg((P)_{i,m}\sum_{j}(J)_{m,j} \Bigg) \nonumber\\
&+& \sum_{m \in \mathcal{U}} \Bigg((P)_{i,m}{\sum_{j}(J)_{m,j}} \Bigg).
\end{eqnarray}
Let us consider the general case after the first success, where there exist~$r \geq 1$ sub-stochastic rows in~$J$, i.e.~$\vert \mathcal{I} \vert=r$, and~$\vert \mathcal{U} \vert=n-r$. Without loss of generality, suppose the~$r$ sub-stochastic rows of~$J$ lie in the first~$r$ rows. We need to show that if the~$i$th row in~$J$ is sub-stochastic, i.e.~$i\leq r$, it remains sub-stochastic after a multiplication by either a stochastic or a sub-stochastic system matrix,~$P$. Rewrite the~$i$th row sum as
\begin{eqnarray}\label{27}
\sum_{j}(PJ)_{i,j}=
(P)_{i,1}\sum_{j}(J)_{1,j}+ \ldots + (P)_{i,r}\sum_{j}(J)_{r,j}\nonumber\\
+ (P)_{i,r+1}\underbrace{\sum_{j}(J)_{r+1,j}}_{=1}+ \ldots + (P)_{i,n}\underbrace{\sum_{j}(J)_{n,j}}_{=1}\nonumber.
\end{eqnarray}
Thus,
\begin{eqnarray}
\sum_{j}(PJ)_{i,j}&=&(P)_{i,1}\sum_{j}(J)_{1,j}+ \ldots + (P)_{i,r}\sum_{j}(J)_{r,j}\nonumber\\
&+&(P)_{i,r+1}+\ldots+(P)_{i,n},
\end{eqnarray}
where we used the fact that in~$J$, any row~$\in \mathcal{U}$ is stochastic. 

\emph{Let us first consider the~$i$th row of~$P$ to be stochastic}:
\begin{eqnarray*}
(P)_{i,1}+\ldots+(P)_{i,r}+(P)_{i,r+1}+\ldots+(P)_{i,n} = 1.
\end{eqnarray*}
Thus,
\begin{eqnarray*}
(P)_{i,r+1}+\ldots+(P)_{i,n} = 1 - \Big((P)_{i,1}+\ldots+(P)_{i,r}\Big).
\end{eqnarray*}
Therefore, from Eq.~\eqref{27} we can write
\begin{eqnarray}
\sum_{j}(PJ)_{i,j}&=&
(P)_{i,1}\sum_{j}(J)_{1,j}+ \ldots + (P)_{i,r}\sum_{j}(J)_{r,j}\nonumber\\
&+& 1- \Big((P)_{i,1}+\ldots+(P)_{i,r}\Big).
\end{eqnarray}
Finally,
\begin{eqnarray}\label{29}
0 \leq \sum_{j}(PJ)_{i,j} = & 1 +&  (P)_{i,1}\Bigg(\sum_{j}(J)_{1,j} -1\Bigg)\nonumber\\
&\vdots& \nonumber\\
&+& (P)_{i,r}\Bigg(\sum_{j}(J)_{r,j} -1\Bigg),\\
\leq & 1 +&  (P)_{i,i}\Bigg(\sum_{j}(J)_{i,j} -1\Bigg),\label{25}
\end{eqnarray}
because the first~$r$ rows in~$J$ are sub-stochastic leading to~$\sum_j(J)_{m,j}-1<0$, for any~$m=1,\ldots,i,\ldots r$, and since the~$i$th row in~$P$ is stochastic, by Assumption {\bf{A1}} we have
\[
0<\beta_1\leq (P)_{i,i}.
\]
Note that in Eq.~\eqref{29} the only way to lose sub-stochasticity is to have~$(P)_{i,m}=0$ for all~$m\leq r$. However, sub-stochasticity can be preserved by putting a non-zero weight on any row in~$\mc{I}$. Since this knowledge in not available in general, a sufficient condition to ensure this is~$0<\beta_1\leq(P)_{i,i}$. Thus, the~$i$th row sum remains strictly less than one (and greater than zero) after any stochastic update at the~$i$th row as long as Assumption {\bf{A1}} is satisfied. Note that the lower bound on the~$j$th row sum stems from the non-negativity of system matrices. 

\emph{Now consider the~$i$th row of~$P$ to be sub-stochastic}. From {\bf{A2}}, we have
\begin{eqnarray*}
(P)_{i,1}+\ldots+(P)_{i,r}+(P)_{i,r+1}+\ldots+(P)_{i,n} \leq \beta_2 <1.
\end{eqnarray*}
Therefore,
\begin{eqnarray*}
(P)_{i,r+1}+\ldots+(P)_{i,n} \leq \beta_2 - \Big((P)_{i,1}+\ldots+(P)_{i,r}\Big).
\end{eqnarray*}
Thus, from Eq.~\eqref{27} we can write
\begin{eqnarray}
\sum_{j}(PJ)_{i,j}&\leq&
(P)_{i,1}\sum_{j}(J)_{1,j}+ \ldots + (P)_{i,r}\sum_{j}(J)_{r,j}\nonumber\\
&+& \beta_2- \Big((P)_{i,1}+\ldots+(P)_{i,r}\Big).
\end{eqnarray}
Finally,
\begin{eqnarray}\label{29-}
0 \leq \sum_{j}(PJ)_{i,j}\leq \beta_2 &+&  (P)_{i,1}\Bigg(\sum_{j}(J)_{1,j} -1\Bigg)\nonumber\\
&\vdots& \nonumber\\
&+& (P)_{i,r}\Bigg(\sum_{j}(J)_{r,j} -1\Bigg),\nonumber\\
\leq &\beta_2 & <1,
\end{eqnarray}
where again we used the fact that~$\sum_j(J)_{m,j}-1<0,m=1,\ldots,i,\ldots r$. Eq.~\eqref{29-} shows that in case of a sub-stochastic~$i$th row in~$P_{k+1}$, this row remains sub-stochastic in~$J_{k+1}$, as long as Assumption {\bf{A2}} is satisfied and the conditions on individual weights are not required. Note the strict inequality, i.e. if~$(P)_{i,m}\neq0$ for any~$m=1,\ldots,r$, then
\begin{eqnarray*}
\sum_{j}(PJ)_{i,j} < \beta_2.
\end{eqnarray*}
This lemma establishes that under the Assumptions {\bf A0-A2}, sub-stochasticity is always preserved.
\end{proof} 

The results so far describe the behavior of the sub-stochastic rows in the slices explicitly derived under the regularity conditions in Assumptions {\bf A0-A2}. The next results characterize the infinity norm bound on the slices. To this end, let us define~${\beta}_4(j)$, as the \emph{maximum row sum over the sub-stochastic rows} of the product of all system matrices before~$P_j$ in the~$M$th slice. Mathematically,
\begin{eqnarray}\label{beta4}
{\beta}_4(j)= \max\limits_{m \in \mathcal{I}_{j-1}} \{{v}_m\},
\end{eqnarray}
where~$v_m$ is the~$m$th element of the following column vector
\begin{eqnarray*}
\mathbf{v}_{j-1}= (J_{j-1}){\textbf{1}_n}=(P_{j-1} \ldots P_{3}P_{2}P_{1}){\textbf{1}_n},
\end{eqnarray*}
and~${\textbf{1}}_n$ is the column vector of~$n$ ones.

It can be inferred from our discussion so far that a sub-stochastic update at row~$i$ is sufficient but not necessary for the~$i$th row to be sub-stochastic. In the following lemma, we consider the case where no sub-stochastic update occurs at row~$i$ throughout a slice, and provide an upper bound for the~$i$th row sum at the end of a slice.
\begin{lem}\label{lm4}
Assume there is no sub-stochastic update at the~$i$-th row within a given slice,~$M$. The~$i$-th row sum of this slice is upper bounded by
\begin{eqnarray}
1+{\beta_1}^{\vert {M}\vert - {h}_{i} +1}({\beta}_4({h}_{i})-1), 
\end{eqnarray}
where the first success at row~$i$ occurs in the~${h}_{i}$-th update of this slice.
\end{lem}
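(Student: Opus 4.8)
The plan is to track the $i$-th row sum of $J_k$ across the slice, writing $\sigma_i(k):=\sum_j (J_k)_{i,j}$, and to argue by induction on the update index $k$ as it runs from the first-success index $h_i$ up to $|M|$. The governing observation is that, since by hypothesis no sub-stochastic update ever occurs at row $i$, every step that touches row $i$ is a stochastic update and hence obeys Assumption \textbf{A1} (self-weight and neighboring weights bounded below by $\beta_1$), while every step that does not touch row $i$ leaves it as an identity row and keeps $\sigma_i$ unchanged.

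\emph{Base case} ($k=h_i$). Here row $i$ of $J_{h_i-1}$ is still stochastic and the triggering update is stochastic, so by the same algebra that produces Eq.~\eqref{29} one has
\[
\sigma_i(h_i) = 1 + \sum_{m\in\mathcal{I}_{h_i-1}}(P_{h_i})_{i,m}\bigl(\sigma_m(h_i-1)-1\bigr),
\]
in which every summand is non-positive because each sub-stochastic row satisfies $\sigma_m(h_i-1)-1<0$. By Lemma~\ref{lem2}(ii) there is a sub-stochastic row $m'\in\mathcal{I}_{h_i-1}$ with $(P_{h_i})_{i,m'}\neq0$; retaining only that term, and invoking $(P_{h_i})_{i,m'}\geq\beta_1$ (Assumption \textbf{A1}) together with $\sigma_{m'}(h_i-1)\leq\beta_4(h_i)$ (the definition in Eq.~\eqref{beta4}), I get $\sigma_i(h_i)\leq 1+\beta_1(\beta_4(h_i)-1)$, which is exactly the claimed bound in the boundary case $|M|=h_i$.

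\emph{Inductive step.} For $h_i<k\leq|M|$, row $i$ of $J_{k-1}$ is already sub-stochastic (Lemma~\ref{lem3}), so any stochastic update at row $i$ satisfies Eq.~\eqref{25}, giving $\sigma_i(k)\leq 1+(P_k)_{i,i}(\sigma_i(k-1)-1)\leq 1+\beta_1(\sigma_i(k-1)-1)$, where the last step uses $(P_k)_{i,i}\geq\beta_1$ and $\sigma_i(k-1)-1<0$. Introducing the deficit $\delta(k):=1-\sigma_i(k)\geq0$, this reads $\delta(k)\geq\beta_1\,\delta(k-1)$; and if row $i$ is untouched at step $k$ then $\delta(k)=\delta(k-1)\geq\beta_1\,\delta(k-1)$ simply because $\beta_1<1$. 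Thus $\delta(k)\geq\beta_1\,\delta(k-1)$ holds at each of the $|M|-h_i$ steps after the first success. Chaining these with the base estimate $\delta(h_i)\geq\beta_1(1-\beta_4(h_i))$ yields $\delta(|M|)\geq\beta_1^{\,|M|-h_i+1}(1-\beta_4(h_i))$, i.e.\ $\sigma_i(|M|)\leq 1+\beta_1^{\,|M|-h_i+1}(\beta_4(h_i)-1)$, as required.

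I expect the main difficulty to be bookkeeping rather than any deep step: I must make the exponent count correctly (one factor of $\beta_1$ from the triggering success, plus one per subsequent step whether or not row $i$ is updated), and apply the ``drop all but one non-positive term'' reduction consistently---keeping the off-diagonal triggering weight $(P_{h_i})_{i,m'}$ at the base case but the diagonal self-weight $(P_k)_{i,i}$ thereafter. The only genuine content is that Assumption \textbf{A1} supplies the uniform lower bound $\beta_1$ in both roles, and that the non-updating (identity) steps cannot spoil the geometric decay precisely because $\beta_1<1$ forces $\delta(k)=\delta(k-1)\geq\beta_1\,\delta(k-1)$.
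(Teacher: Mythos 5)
Your proof is correct and follows essentially the same route as the paper's: one factor of $\beta_1$ from the triggering stochastic success (bounded via Eq.~\eqref{29} together with the definition of $\beta_4(h_i)$ in Eq.~\eqref{beta4}) and one factor per subsequent update via Eq.~\eqref{25}, yielding geometric decay of the deficit $1-\sigma_i$ over the $\vert M\vert-h_i$ remaining steps. Your explicit induction on the deficit, including the observation that untouched (identity) steps still satisfy $\delta(k)=\delta(k-1)\geq\beta_1\,\delta(k-1)$, is a slightly tidier packaging of the paper's worst-case-scenario narrative, but the mathematical content is identical.
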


\begin{proof}
Eq.~\eqref{29} expresses the~$i$th row sum after a stochastic update at row~$i$. Clearly, before the first success, at index~$h_i$,
\begin{eqnarray*}
\sum_{j}(J_k)_{i,j}=1,\qquad \forall k < {h}_{i}.
\end{eqnarray*}
In order to find the maximum possible row sum for the~$i$th row at the end of a slice, we should find a scenario, which maximizes the row sum after the first success at index~$h_i$ and keeps maximizing it at each subsequent update. Let us consider Eq.~\eqref{29} after the first success at index~$h_i$. Since no sub-stochastic update is allowed at row~$i$ from the lemma's statement, the first success occurs via a stochastic update at the~$i$th row, and Assumption~{\bf A1} is applicable. Since any non-zero~$(P)_{i,m \in \mathcal{I}}$ decreases the row sum, the minimum number of such weights maximizes the right hand side (RHS) of Eq.~\eqref{29}. Suppose~$(P_{{h}_{i}})_{i,{r}^{\prime}}$ is the only non-zero among all~$(P_{{h}_{i}})_{i,j \in \mathcal{I}}$'s. In this case, Eq.~\eqref{29} reduces to the following
\begin{eqnarray}\label{30}
\sum_{j}(P_{{h}_{i}} J_{{{h}_{i}-1}})_{i,j}= 1 -  (P_{{h}_{i}})_{i,{r}^{\prime}}\Bigg(1-\sum_{j}(J_{{{h}_{i}-1}})_{{r}^{\prime},j}\Bigg),
\end{eqnarray}
in which~${r}^{\prime} \in \mathcal{I}_{{h}_{i}-1}$. Also note that~${r}^{\prime} \neq i$, since~$i$ is stochastic before the time instant,~$h_i$. In order to maximize the RHS of Eq.~\eqref{30},~$(P_{{h}_{i}})_{i,{r}^{\prime}}$ should be minimized, and~$\sum_{j}{(J_{{{h}_{i}-1}})}_{{r}^{\prime},j}$ should be maximized. From Eq.~\eqref{beta4}, the maximum value of~$\sum_{j}{(J_{{{h}_{i}-1}})}_{{r}^{\prime},j}$ before the first success is~${\beta}_4({h}_{i})$. Thus, after the~${h}_{i}$th update, where row~$i$ becomes sub-stochastic for the first time, we can write
\begin{eqnarray}\label{31}
\sum_{j}(P_{{h}_{i}} J_{{{h}_{i}-1}})_{i,j}\leq 1 - \beta_1 \Bigg(1-{\beta}_4({h}_{i})\Bigg).
\end{eqnarray}
After this update,~$J_{{h}_{i}}=P_{{h}_{i}}\ldots P_2 P_1$, and
\begin{eqnarray}
{\beta}_4({h}_{i}) \leq {\beta}_4({h}_{i}+1) \leq 1 - \beta_1 \Bigg(1-{\beta}_4({h}_{i})\Bigg),
\end{eqnarray}
where~${\beta}_4({h}_{i})$ is the~$r^{\prime}$th row sum in~$J_{{h_i}-1}$, and~${\beta}_4({h}_{i}+1)$ is the~$i$th row sum in~$J_{h_i}$.
Under this scenario, after the first success, the~$i$th row has the maximum row sum over all rows of~$J_{{h}_{i}}$, and in order to increase this row sum at the next update, the~$i$th row has to update only with itself. Note that after the success at index~$h_i$, row~$i$ becomes sub-stochastic, and~${r}^{\prime}=i$, for any subsequent update until the end of a slice. After the next update,~$P_{{h}_{i}+1}$, using the same argument we can write
\begin{eqnarray}\label{32}
\sum_{j}(P_{{h}_{i}+1} J_{{h}_{i}})_{i,j} & = & 1 -  (P_{{h}_{i}+1})_{i,i}\Bigg(1-\sum_{j}(J_{{h}_{i}})_{i,j}\Bigg),\nonumber\\
&\leq& 1 - \beta_1 \Bigg(1- (1 - \beta_1 (1-{\beta}_4({h}_{i}))\Bigg)\nonumber\\
&=& 1 - {\beta_1}^{2} \Bigg(1-{\beta}_4({h}_{i})\Bigg).
\end{eqnarray}
If row~$i$ keeps updating with itself, at the end of slice, we have after~${\vert {M}\vert - {h}_{i}}$ number of such updates
\begin{eqnarray}\label{33}
\sum_{j}({P}_{\vert {M}\vert} J_{\vert {M}\vert -1 })_{i,j} \leq 1+{\beta_1}^{\vert {M}\vert - {h}_{i}+1}({\beta}_4({h}_{i}) -1),
\end{eqnarray}
and the lemma follows.
\end{proof}

In the following lemma, we consider the general case where sub-stochastic updates are also allowed at row~$i$ and provide an upper bound for the~$i$th row sum at the end of a slice.
\begin{lem}\label{lm5}
Assume there is at least one sub-stochastic update at the~$i$-th row within a given slice,~$M$. The~$i$-th row sum of this slice is upper bounded by
\begin{eqnarray}
1+{\beta_1}^{\vert {M}\vert - {g}_{i}}({\beta}_2 -1),
\end{eqnarray}
where the last sub-stochastic update at row~$i$ occurs in the~${g}_{i}$-th update of a slice.
\end{lem}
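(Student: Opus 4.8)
The plan is to anchor the argument at the \emph{last} sub-stochastic update of row~$i$, which occurs at index~$g_i$, and then to track how the $i$-th row sum evolves over the remaining $|M|-g_i$ updates of the slice. Immediately after the update at~$g_i$, case~(i) of Lemma~\ref{lem2}, specifically Eq.~\eqref{19}, gives that the $i$-th row sum of~$J_{g_i}$ is at most~$\beta_2$. Writing $\sigma_k \triangleq \sum_j (J_k)_{i,j}$ for brevity, this reads $\sigma_{g_i}\leq\beta_2<1$, and by Lemma~\ref{lem3} the row stays sub-stochastic thereafter, so $\sigma_k<1$ for all $k\geq g_i$. The target is the inequality $\sigma_{|M|}\leq 1+\beta_1^{\,|M|-g_i}(\beta_2-1)$.

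Next I would establish a one-step recursion on $\sigma_k$ that holds for every update after~$g_i$. By the definition of~$g_i$ as the last sub-stochastic update at row~$i$, any update that touches row~$i$ for $k>g_i$ is \emph{stochastic}, so Assumption~\textbf{A1} and the bound in Eq.~\eqref{25} apply, giving $\sigma_{k+1}\leq 1+(P)_{i,i}(\sigma_k-1)$ with $(P)_{i,i}\geq\beta_1$. Since $\sigma_k-1<0$, replacing $(P)_{i,i}$ by its smallest admissible value~$\beta_1$ only enlarges the right-hand side, yielding $\sigma_{k+1}-1\leq\beta_1(\sigma_k-1)$. If instead the update at step $k+1$ does not touch row~$i$, then row~$i$ of $P_{k+1}$ is a row of the identity and $\sigma_{k+1}=\sigma_k$; because $0<\beta_1<1$ and $\sigma_k-1<0$ one still has $\sigma_k-1\leq\beta_1(\sigma_k-1)$, so the same recursion holds. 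Hence $\sigma_{k+1}-1\leq\beta_1(\sigma_k-1)$ for every $k$ with $g_i\leq k<|M|$.

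Finally I would iterate this recursion across the $|M|-g_i$ updates following~$g_i$. As $\beta_1>0$ preserves the direction of the inequality, composing the bound gives $\sigma_{|M|}-1\leq\beta_1^{\,|M|-g_i}(\sigma_{g_i}-1)\leq\beta_1^{\,|M|-g_i}(\beta_2-1)$, where the last step uses $\sigma_{g_i}\leq\beta_2$ together with $\beta_1^{\,|M|-g_i}>0$ and $\beta_2-1<0$. Rearranging yields the claimed bound $\sigma_{|M|}\leq 1+\beta_1^{\,|M|-g_i}(\beta_2-1)$, which completes the argument.

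The step I expect to be most delicate is identifying the correct worst case. The bound is tight not when row~$i$ is left alone, but when it is repeatedly mixed \emph{stochastically} with itself using the minimal self-weight~$\beta_1$: each such update routes the maximal remaining mass $1-\beta_1$ onto stochastic (unit-sum) rows and therefore \emph{increases} the row sum back toward one, multiplying the deficit $1-\sigma_k$ by exactly~$\beta_1$. Recognizing that a stochastic update on an already sub-stochastic row raises rather than lowers its sum is the conceptual crux; it also explains why the exponent here is $|M|-g_i$ rather than the $|M|-h_i+1$ of Lemma~\ref{lm4}, since the sub-stochastic update at~$g_i$ supplies the factor~$\beta_2$ directly and contributes no factor of~$\beta_1$, leaving only the $|M|-g_i$ subsequent stochastic steps to accumulate the powers of~$\beta_1$.
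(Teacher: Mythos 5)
Your proposal is correct and follows essentially the same route as the paper: anchor the row sum at $\beta_2$ after the last sub-stochastic update at index $g_i$, then apply the one-step bound $\sigma_{k+1}-1\leq\beta_1(\sigma_k-1)$ from Eq.~\eqref{25} (with self-weight at least $\beta_1$ by Assumption \textbf{A1}) across the remaining $|M|-g_i$ updates. Your explicit recursion, including the observation that identity rows trivially satisfy the same contraction inequality, is a cleaner formalization of the paper's ``keep updating with itself'' worst-case argument, but it is not a different proof.
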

\begin{proof}
As shown in Eq.~\eqref{29-} and by Assumption {\bf A2}, any sub-stochastic update at row~$i$ imposes the upper bound of~$\beta_2$ on the~$i$th row sum.
Thus, after the \textit{last} sub-stochastic update at row~$i$ we have
\begin{eqnarray*}
\sum_{j}(P_{{g}_{i}} J_{{g}_{i}-1})_{i,j}\leq \beta_2 < 1.
\end{eqnarray*}
After~$P_{{g}_{i}}$, there is no sub-stochastic update, and by Assumption {\bf{A1}}, the~$i$th self-weight will be non-zero until the end of the slice. Following the same argument as in Lemma~\ref{lm4}, the upper bound on the~$i$th row sum is maximized after each update if the~$i$th row does not update with any  sub-stochastic row other than itself. For any update after the last success, Eq.~\eqref{bnd1} holds and we have
\begin{eqnarray}
{\mc{N}_{i}(k)}\cap\mathcal{I}_k = \emptyset,\qquad \forall k > {g}_{i}.
\end{eqnarray} 
After the~$P_{{g}_{i}+1}$th update we have
\begin{eqnarray}\label{34}
\sum_{j}(P_{{g}_{i}+1} J_{{g}_{i}})_{i,j}\leq 1 - \beta_1 \Bigg(1-{\beta}_2\Bigg),
\end{eqnarray}
and at the end of a slice, we have
\begin{eqnarray}\label{35}
\sum_{j}({P}_{\vert {M}\vert} J_{\vert {M}\vert -1 })_{i,j} \leq 1+{\beta_1}^{\vert {M}\vert - {g}_{i}}({\beta}_2 -1),
\end{eqnarray}
and the lemma follows.
\end{proof}

In the previous two lemmas, we provide an upper bound for each row sum for two cases: when all updates are stochastic \emph{and} when sub-stochastic updates are also allowed. The following lemma combines these bounds and relate them to the infinity norm bound of a slice.

\begin{lem}
For a given slice,~${{{M}}}$, 
\begin{eqnarray}\label{bound1}
\|M\|_\infty\leq\max\limits_{i}\{1+{\beta_1}^{\vert {M}\vert - l_i}({\beta} -1)\},
\end{eqnarray}
where
\begin{eqnarray*}
l_i &=& h_i - 1,~\beta=\beta_4(h_i),~\mbox{stochastic updates at row i},\\
l_i &=& g_i,~\beta=\beta_2,~\mbox{(sub-) stochastic updates at row i}.
\end{eqnarray*}
\end{lem}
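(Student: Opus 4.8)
The plan is to read off the claim directly from the definition of the infinity norm. Since each slice $M = J_{|M|}$ is a product of non-negative matrices, $M$ is itself non-negative, so its maximum absolute row sum is just its maximum row sum; that is, $\|M\|_\infty = \max_i \sum_j (M)_{i,j}$. It therefore suffices to bound each individual terminal row sum $\sum_j (J_{|M|})_{i,j}$ and then maximize over the rows $i$. The two preceding lemmas, Lemma~\ref{lm4} and Lemma~\ref{lm5}, already supply exactly these per-row bounds, so the present lemma is an assembly step.

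First I would invoke the slice construction: a slice is defined to terminate only \emph{after} the $n$th success, at which point every row of $J_{|M|}$ has become sub-stochastic. Hence every row $i$ possesses a well-defined first success and falls into exactly one of two mutually exclusive and exhaustive categories: either \emph{(a)} no sub-stochastic update ever occurs at row $i$ within the slice, or \emph{(b)} at least one sub-stochastic update occurs at row $i$. These are precisely the standing hypotheses of Lemma~\ref{lm4} and Lemma~\ref{lm5}, respectively, so one of the two necessarily applies to each row.

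Next I would apply the relevant lemma row by row and recast its bound in the unified template $1 + \beta_1^{|M|-l_i}(\beta-1)$. In category \emph{(a)}, Lemma~\ref{lm4} gives $\sum_j (J_{|M|})_{i,j} \leq 1 + \beta_1^{|M|-h_i+1}(\beta_4(h_i)-1)$, which is the template with $l_i = h_i-1$ and $\beta = \beta_4(h_i)$; in category \emph{(b)}, Lemma~\ref{lm5} gives $\sum_j (J_{|M|})_{i,j} \leq 1 + \beta_1^{|M|-g_i}(\beta_2-1)$, matching the template with $l_i = g_i$ and $\beta = \beta_2$. Taking the maximum over all rows and combining with the first observation yields
\[
\|M\|_\infty = \max_i \sum_j (M)_{i,j} \leq \max_i\{1 + \beta_1^{|M|-l_i}(\beta-1)\},
\]
which is the asserted inequality.

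Because the lemma merely synthesizes Lemma~\ref{lm4} and Lemma~\ref{lm5}, I do not expect a genuinely hard step. The only point requiring care is confirming that the two-case partition is exhaustive, i.e. that every row has indeed turned sub-stochastic by the end of the slice so that one of the two earlier lemmas is always applicable; this follows immediately from the slice being defined to terminate at the $n$th success, and from the fact that each success either is a direct sub-stochastic update (category \emph{(b)}) or arises via a stochastic update propagating sub-stochasticity from another row (category \emph{(a)}, as in Lemma~\ref{lem2}).
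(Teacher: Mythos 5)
Your proof is correct and follows exactly the route the paper intends: the paper states this lemma without a written proof, presenting it as the direct assembly of Lemma~\ref{lm4} and Lemma~\ref{lm5}, and your argument (non-negativity reduces $\|M\|_\infty$ to the maximum row sum, the two hypotheses of the preceding lemmas exhaustively partition the rows since every row has succeeded by the end of a slice, then take the maximum) is precisely that assembly, spelled out.
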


\noindent The next lemma studies the worst case scenario for the infinity norm of a slice, which provides an upper bound for Eq.~\eqref{bound1}.

\begin{lem}\label{lm7}
With assumptions {\bf{A0-A2}}, for the~$j$th slice we have
\begin{eqnarray}\label{41}
{{\Vert M_j \Vert}_{\infty}} \leq 1-{\alpha}_j <1, \qquad j\geq0,
\end{eqnarray}
where 
\begin{eqnarray}\label{42}
{\alpha}_j = f({{\vert M_j \vert}},\beta_1,\beta_2)={\beta_1}^{{{\vert M_j \vert}} - 1}({1-\beta}_2).
\end{eqnarray}
\end{lem}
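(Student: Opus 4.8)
The plan is to start from the combined per-row bound of the preceding lemma, namely $\|M\|_\infty \leq \max_i \{1 - \beta_1^{|M|-l_i}(1-\beta)\}$ (rewriting $1+\beta_1^{|M|-l_i}(\beta-1)$ and noting $\beta<1$ in both branches), and to identify the single row whose final row sum is largest. Since the infinity norm of the slice is exactly this maximum, it suffices to exhibit a uniform lower bound $\beta_1^{|M|-l_i}(1-\beta) \geq \beta_1^{|M|-1}(1-\beta_2)$ on the per-row reduction, valid for every row $i$; the largest surviving row sum then yields the claimed bound $1-\alpha_j$.

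First I would pin down the worst-case configuration. By the slice construction each slice opens with its first success, and by Lemma~\ref{lem2} a stochastic update cannot create a sub-stochastic row while none yet exists; hence the opening update $P_1$ is necessarily sub-stochastic, consistent with Lemma~\ref{lem1}. The row sum that can remain closest to one at the end of the slice belongs to a row that (i) becomes sub-stochastic as early as possible---at index $1$, so that the reduction is compounded by self-updates the fewest times---via a single sub-stochastic update, and (ii) thereafter updates only with itself. Applying Lemma~\ref{lm5} to this row with $g_i = 1$ gives the row-sum bound $1 - \beta_1^{|M|-1}(1-\beta_2)$, which is precisely $1-\alpha_j$ and accounts for the exponent $|M|-1$ and the factor $1-\beta_2$.

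It then remains to check that no other row can exceed this value, i.e. that the displayed configuration is genuinely extremal. For any row receiving at least one sub-stochastic update, Lemma~\ref{lm5} gives $1 - \beta_1^{|M|-g_i}(1-\beta_2)$; since $g_i \geq 1$ and $0<\beta_1<1$ force $\beta_1^{|M|-g_i} \geq \beta_1^{|M|-1}$, this is at most $1-\beta_1^{|M|-1}(1-\beta_2)$, as required. For a row that sees only stochastic updates, Lemma~\ref{lm4} gives $1 - \beta_1^{|M|-h_i+1}(1-\beta_4(h_i))$, and one argues that the smaller reduction factor $1-\beta_4(h_i)$ (which may exceed $1-\beta_2$ because an already sub-stochastic row sum can have grown toward one) is always compensated by the larger exponent $|M|-h_i+1$, so that this branch too is dominated by $1-\alpha_j$. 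Finally, $0<\beta_1<1$ and $\beta_2<1$ force $\alpha_j = \beta_1^{|M_j|-1}(1-\beta_2) > 0$, giving $\|M_j\|_\infty \leq 1-\alpha_j < 1$.

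The hard part is the uniform domination in the stochastic-only branch: unlike the sub-stochastic branch, where the factor is pinned at $1-\beta_2$ and only the exponent varies, here both the exponent and the reduction factor $1-\beta_4(h_i)$ depend on the unknown history. The crux is to show that $\beta_4(h_i)$ is itself controlled by the worst-case row sum accumulated up to index $h_i$---in effect a monotonicity argument on the running maximum $\max_i \sum_j (J_k)_{i,j}$---so that $\beta_1^{|M|-h_i+1}(1-\beta_4(h_i))$ never drops below $\beta_1^{|M|-1}(1-\beta_2)$. Establishing this reduction carefully, rather than the bookkeeping in the two easy cases, is where the argument must be made tight.
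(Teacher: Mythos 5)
Your route is the same as the paper's: take the two per-row bounds from Lemmas~\ref{lm4} and~\ref{lm5}, identify the extremal configuration (a single sub-stochastic update at the very start of the slice followed by self-updates only), and show every other row is dominated by $1+\beta_1^{|M_j|-1}(\beta_2-1)$. The sub-stochastic branch is fine: $g_i\geq 1$ and $0<\beta_1<1$ give $\beta_1^{|M|-g_i}\geq\beta_1^{|M|-1}$, hence $1-\beta_1^{|M|-g_i}(1-\beta_2)\leq 1-\alpha_j$, and $\alpha_j>0$ gives the strict inequality.

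The gap is exactly where you flag it, and flagging it is not the same as closing it: in the stochastic-only branch you need $\beta_1^{|M|-h_i+1}\bigl(1-\beta_4(h_i)\bigr)\geq\beta_1^{|M|-1}(1-\beta_2)$, which you call ``the crux'' and attribute to ``a monotonicity argument on the running maximum'' without supplying it. The missing ingredient is the explicit bound $\beta_4(j)\leq 1+\beta_1^{\,j-2}(\beta_2-1)$ for $j\geq 2$, proved by induction along the slice: the opening update is sub-stochastic, so the unique sub-stochastic row of $J_1$ has sum at most $\beta_2=1+\beta_1^{0}(\beta_2-1)$; at each subsequent index a non-updating row keeps its sum, a sub-stochastic update caps its row at $\beta_2\leq 1+\beta_1^{\,j-1}(\beta_2-1)$ by Eq.~\eqref{29-}, and a stochastic update satisfies $\sum_{\ell}(PJ)_{i,\ell}\leq 1-\beta_1\bigl(1-\beta_4(j)\bigr)\leq 1+\beta_1^{\,j-1}(\beta_2-1)$ by Eq.~\eqref{29}, since Assumption {\bf A1} guarantees a weight of at least $\beta_1$ on at least one row whose sum is at most $\beta_4(j)$. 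With this bound the trade-off is not merely ``compensated'' but cancels exactly, $\beta_1^{|M|-h_i+1}\cdot\beta_1^{\,h_i-2}(1-\beta_2)=\beta_1^{|M|-1}(1-\beta_2)$, which is why both branches return the identical constant $\alpha_j$; this is precisely what the paper's worst-case construction with $\beta_4(h_i=n)=1+\beta_1^{\,n-2}(\beta_2-1)$ encodes. A minor slip in your wording: with $h_i\geq 2$ the exponent $|M|-h_i+1$ is \emph{smaller} than $|M|-1$, so $\beta_1^{|M|-h_i+1}$ is the \emph{larger} factor; calling it ``the larger exponent'' reverses the direction of the compensation you actually need.
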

\begin{proof}
In order to find the maximum upper bound on the infinity norm of a slice, we  consider a \textit{worst case scenario}, in which a row sum incurs the largest increase throughout the slice. To do so, we examine the maximum possible upper bound on the~$i$th row sum for the two cases discussed in Lemmas~\ref{lm4} and~\ref{lm5} separately. 

Consider no sub-stochastic update at the~$i$th row. We should find a scenario that maximizes the RHS of Eq.~\eqref{33}. In addition, we need to make sure that such scenario is practical, i.e. all other rows become sub-stochastic before a slice is terminated. Since there are no sub-stochastic updates at row~$i$, a slice can not be initiated by an update in row~$i$, i.e.~$h_i \geq 2$. At the initiation of a slice, one row other than~$i$, becomes sub-stochastic, and the upper bound imposed on this row is~$\beta_2$ by Assumption~{\bf{A2}}, hence~${\beta}_4(h_i)={\beta}_4(2)=\beta_2$. Therefore, following the discussion in Lemma~\ref{lm4},
\begin{eqnarray}\label{43}
1+{\beta_1}^{\vert {M_j}\vert - 1}({\beta}_2-1),
\end{eqnarray}
provides the largest upper bound on the~$i$th row sum of $M_j$. Note that this bound is feasible if we consider the following scenario. After row~$i$ becomes sub-stochastic at~$h_i=2$ we let next~$n-2$ updates for the other stochastic rows to become sub-stochastic, each updating \textit{only} with the sub-stochastic row with the largest row sum. Thus the largest row sum keeps increasing in the same manner as discussed in Lemma~\ref{lm4} within the next~$n-2$ updates. At~$n+1$th update, row~$i$ again updates with a row, which has the maximum row sum in~$J_n$, and keeps updating by itself until the slice is terminated. The aforementioned scenario is equivalent to the one where the first success at row~$i$ occurs at~$h_i=n$, and all other rows become sub-stochastic within the first~$n-1$ updates, and 
\begin{equation}
\beta_4(h_i=n)= 1 + {\beta_1}^{n-2}(\beta_2-1).
\end{equation}

Now consider sub-stochastic updates at row~$i$. The RHS of Eq.~\eqref{35} is maximized if~$g_i$ is minimized. In this case, the minimum value for~$g_i$ is one, which corresponds to a scenario where a sub-stochastic update at row~$i$ initiates a slice and no other sub-stochastic update occurs at this row. Using the same argument as before, all other rows become sub-stochastic within the next~$n-1$ updates and the largest upper bound on the~$i$th row in this case is the same as the one given in Eq.~\eqref{43}.
\end{proof}
Finally, note that for a given slice,~$M$, 
\begin{eqnarray}
\|M\|_\infty\leq 1+ {\beta_1}^{\vert {{{M}}} \vert-1}({\beta_2}-1)
\end{eqnarray}
is the \textit{largest} upper bound on the infinity norm of a slice.

\section{Stability of discrete-time  systems}\label{stability}
In this section, we study the stability of discrete-time, LTV dynamics with (sub-) stochastic system matrices. We start with the following definitions:
\begin{definition}
The system represented in Eq.~\eqref{eq2} is \textit{asymptotically stable} (or convergent) if for any~${\bf{x}}(0)$,
\begin{equation}
\lim_{k \rightarrow \infty}{\bf{x}}(k) \nonumber
\end{equation}
is bounded and convergent.
\end{definition}

\begin{definition}
The system represented in Eq.~\eqref{eq2} is \textit{absolutely asymptotically stable} (or zero-convergent) if for any~${\bf{x}}(0)$, 
\begin{equation}
\lim_{k \rightarrow \infty}{\bf{x}}(k)=0.\nonumber
\end{equation}
\end{definition}

Recall that we are interested in the asymptotic stability of Eq.~\eqref{eq2}, such that the steady-state forgets the initial conditions and is a function of inputs. A sufficient condition towards this aim is the absolutely asymptotic stability of the following:
\begin{eqnarray}\label{eqq15}
{\bf{x}}(k+1)&=&{{P}_k}{\bf{x}}(k), \qquad k\geq0,\\
&=&{P}_{k} {P}_{k-1} \ldots {{P}_0}{\bf{x}}(0),\nonumber
\end{eqnarray}
for any~${\bf{x}}(0)$, which is equivalent to having 
\begin{equation}
\lim_{k \rightarrow \infty} {P}_{k} {P}_{k-1} \ldots {{P}_0}= \mb{0}_{n\times n},
\end{equation}
where the subscript below $\mb{0}$ denote its dimensions. As depicted in Fig.~\ref{f0}, we can take advantage of the slice representation and study the following dynamics:
\begin{eqnarray}\label{eqq26}\label{50}
{\bf{y}}(t+1)={{M}}_{t}{\bf{y}}(t), \qquad t \geq 0,
\end{eqnarray}
instead of Eq.~\eqref{eqq15}, where
\begin{eqnarray*}
{\bf{y}}(0)&=&{\bf{x}}(0),\\
{\bf{y}}(t)&=& {\bf{x}}\left(t_1\right), \qquad t\geq1, t_1=\sum\limits_{i=1}^{t}{\vert M_i \vert}.
\end{eqnarray*}
Thus, for absolutely asymptotic stability of Eq.~\eqref{eqq26}, for any~${\bf{y}}(0)$, we require
\begin{eqnarray}\label{49}
\lim_{t \rightarrow \infty}{\bf{y}}(t+1)&=&\lim_{t \rightarrow \infty} {{M}}_{t}{\bf{y}}(t),\\
&=& \lim_{t \rightarrow \infty}{{M}}_{t}{{M}}_{t-1} \ldots {{M}}_{0}{\bf{y}}(0),\nonumber\\
&=& \mb{0}_n.\nonumber
\end{eqnarray}
We provide our main result in the following theorem.
\begin{thm}\label{thm0}
With assumption {\bf{A0-A2}}, the LTV system in Eq.~\eqref{eqq26} is absolutely asymptotically stable if either one of the following is true:
\begin{enumerate}[(i)]
\item Each slices has a bounded length, i.e.
\begin{eqnarray}\label{51}
\vert M_j \vert \leq N < {\infty} , \qquad \forall j,~N \in \mathbb{N};
\end{eqnarray}

\item There exist a set,~$J_1$, consisting of an infinite number of slices such that 
\begin{eqnarray}
\vert M_{j} \vert \leq N_1 < {\infty}, \qquad \forall M_j\in J_1,\\
\vert M_{j} \vert < {\infty}, \qquad \forall M_j \notin J_1;
\end{eqnarray}

\item There exists a set,~$J_2$, of slices such that
\begin{eqnarray*}
\exists M_j \in J_2:~~\vert M_j \vert \leq \frac{1}{\ln\left({\beta_1}\right)}\ln\left(\frac{1 - e^{(-\gamma_2i^{-\gamma_1})}}{1-\beta_2}\right)+1,
\end{eqnarray*}
for every~$i \in \mathbb{N}$, and~$|M_j|<\infty,j\notin J_2$.
\end{enumerate}
\end{thm}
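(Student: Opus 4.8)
The plan is to reduce the entire theorem to a single statement about a scalar series. By Lemma~\ref{lm7}, every slice satisfies $\|M_j\|_\infty \leq 1-\alpha_j$ with $\alpha_j = \beta_1^{|M_j|-1}(1-\beta_2)\in(0,1)$. Submultiplicativity of the infinity norm (it is the operator norm induced by the vector $\infty$-norm) then gives
\[
\|M_t M_{t-1}\cdots M_0\|_\infty \leq \prod_{j=0}^{t}\|M_j\|_\infty \leq \prod_{j=0}^{t}(1-\alpha_j),
\]
so it suffices to show $\prod_{j=0}^{\infty}(1-\alpha_j)=0$, which forces the product of slices (hence the original product of system matrices, and thus $\mb{y}(t)$) to vanish. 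Invoking the standard characterization of vanishing infinite products, namely that for $0\leq\alpha_j<1$ one has $\prod_j(1-\alpha_j)=0$ if and only if $\sum_j\alpha_j=\infty$, reduces all three parts to establishing the divergence of $\sum_j\alpha_j$. I would state this reduction once, up front, and then treat cases~(i)--(iii) as increasingly general ways of guaranteeing that divergence.

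Cases~(i) and~(ii) I expect to be short. If $|M_j|\leq N$ for all $j$, then since $0<\beta_1<1$ the map $x\mapsto\beta_1^{x-1}$ is decreasing, so $\alpha_j\geq\beta_1^{N-1}(1-\beta_2)=:\alpha_{\min}>0$ uniformly; because the slices cover the infinite chain of system matrices and each has finite length, there are infinitely many of them, whence $\sum_j\alpha_j\geq\sum_j\alpha_{\min}=\infty$. Case~(ii) is the same argument restricted to the infinite subfamily $J_1$: each $M_j\in J_1$ obeys $\alpha_j\geq\beta_1^{N_1-1}(1-\beta_2)>0$, and since $J_1$ is infinite the partial sum over $J_1$ alone already diverges; dropping the remaining (nonnegative) terms only decreases the sum, so divergence is preserved.

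The substantive case is~(iii), and the crux is inverting the length bound into a lower bound on $\alpha_j$. I would first index the slices of $J_2$ so that the $i$-th one obeys the stated inequality for that value of $i$. Rewriting the bound as $|M_j|-1\leq\frac{1}{\ln\beta_1}\ln\!\big(\tfrac{1-e^{-\gamma_2 i^{-\gamma_1}}}{1-\beta_2}\big)$ and multiplying through by $\ln\beta_1<0$ (which reverses the inequality), then exponentiating, yields $\beta_1^{|M_j|-1}\geq\tfrac{1-e^{-\gamma_2 i^{-\gamma_1}}}{1-\beta_2}$, and hence $\alpha_j=\beta_1^{|M_j|-1}(1-\beta_2)\geq 1-e^{-\gamma_2 i^{-\gamma_1}}$. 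It then remains to show $\sum_{i=1}^{\infty}\big(1-e^{-\gamma_2 i^{-\gamma_1}}\big)=\infty$: for large $i$ the summand is asymptotically $\gamma_2 i^{-\gamma_1}$ via $1-e^{-x}\sim x$, so a limit comparison with the $p$-series $\sum i^{-\gamma_1}$ gives divergence precisely when $0<\gamma_1\leq 1$ (with $\gamma_2>0$).

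I expect the main obstacle to be the bookkeeping in case~(iii) rather than any deep estimate. One must check that the length bound is consistent, since $\frac{1}{\ln\beta_1}\ln(\cdot)+1\geq 1$ only once $\gamma_2 i^{-\gamma_1}\leq\ln(1/\beta_2)$, i.e. for $i$ large; confirm that the sign of $\ln\beta_1$ is tracked correctly in the inversion; and make the indexing of $J_2$ precise so that each $i\in\mathbb{N}$ is matched to a genuinely distinct slice, which in turn forces $J_2$ to be infinite. Once the lower bound $\alpha_j\geq 1-e^{-\gamma_2 i^{-\gamma_1}}$ is secured and the comparison series identified, the divergence of $\sum_j\alpha_j$, and hence $\prod_j(1-\alpha_j)=0$, follows at once, closing all three cases.
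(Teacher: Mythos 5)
Your proposal is correct and follows essentially the same route as the paper: it rests on Lemma~\ref{lm7}'s bound $\|M_j\|_\infty\leq 1-\beta_1^{|M_j|-1}(1-\beta_2)$, submultiplicativity of the infinity norm, and (in case (iii)) inverting the length bound to compare against the divergent series $\sum_i \gamma_2 i^{-\gamma_1}$. The only cosmetic differences are that you phrase the vanishing of the infinite product via $\sum_j\alpha_j=\infty$ rather than the paper's equivalent $\sum_j-\ln(1-\alpha_j)=\infty$, and that you fold cases (i) and (ii) into the same divergence criterion instead of bounding each slice norm by a uniform $\delta<1$ directly.
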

\begin{proof}
Using the sub-multiplicative norm property, Eq.~\eqref{50} leads to
\begin{align}\label{55}
{\Vert{\bf{y}}(t+1)\Vert}_\infty &\leq {\Vert {{M}}_{t} \Vert}_{\infty}  \ldots {\Vert {{M}}_{0} \Vert}_{\infty} {\Vert {\bf{y}}(0) \Vert}_{\infty}.
\end{align}

\emph{Case (i)}: From Eqs.~\eqref{41},~\eqref{42} and~\eqref{51}, we have
\begin{eqnarray}
{{\Vert M_j \Vert}_{\infty}} \leq \delta < 1,\qquad\forall j,
\end{eqnarray}
where $\delta=1+{\beta_1}^{{N} - 1}({\beta}_2-1) <1$, and this case follows. 

\textit{Case (ii)}: We first note that the infinity norm of each slice has a trivial upper bound of~$1$. From Eq.~\eqref{55}, we have
\begin{align}
\lim_{t \rightarrow \infty}{\Vert {\bf{y}}(t+1) \Vert}_{\infty} &\leq \lim_{t \rightarrow \infty}\prod\limits_{j \in J_1} {\Vert{M}_{j}\Vert}_\infty \prod\limits_{j \notin J_1}{\Vert{M}_{j}\Vert}_\infty{\Vert {\bf{y}}(0) \Vert}_{\infty},\nonumber\\
&\leq \lim_{t \rightarrow \infty} \prod\limits_{j \in J_1} {\Vert{M}_{j}\Vert}{\Vert {\bf{y}}(0) \Vert}_{\infty}.
\end{align}
Similar to case (i), this case follows by defining
\begin{eqnarray*}
\|M_j\|_\infty\leq \delta_{1}=1+{\beta_1}^{{N_1} - 1}({\beta}_2-1) <1.,
\end{eqnarray*}

\textit{Case (iii)}: With~$\alpha_j$ in Eq.~\eqref{42}, Eq.~\eqref{55} leads to
\begin{eqnarray}\label{62}
\lim_{t \rightarrow \infty}{\Vert {\bf{y}}(t+1) \Vert}_{\infty} \leq \lim_{t \rightarrow \infty} \prod_{j=0}^t(1-\alpha_j) {\Vert {\bf{y}}(0) \Vert}_{\infty}.
\end{eqnarray}
Consider the asymptotic convergence of the infinite product of a sequence~$1-\alpha_j$ to~$0$. We have
\begin{eqnarray}\label{lne1}
\lim_{t \rightarrow \infty}\prod_{j=1}^t(1-\alpha_j)=0,\mbox{ or } \lim_{t \rightarrow \infty}\sum_{j=1}^t(-\ln(1-\alpha_j))=\infty.
\end{eqnarray}
Now note that
\[
\sum_{i=1}^{\infty}\gamma_2i^{-\gamma_1} = \infty, \qquad \mbox{for }0\leq\gamma_1\leq 1,0<\gamma_2,
\]
because~$\frac{1}{i^{\gamma_1}}$ sums to infinity for all values of~$\gamma_1$ in~$[0,1]$, and multiplying by a positive number,~$\gamma_2$, does not change the infinite sum. It can be verified that Eq.~\eqref{lne1} holds when 
\begin{eqnarray*}
-\ln(1-\alpha_j) &\geq& \gamma_2i^{-\gamma_1},
\end{eqnarray*}
subsequently resulting into 
\begin{eqnarray*}
1-\alpha_j &\leq& e^{(-\gamma_2i^{-\gamma_1})},
\end{eqnarray*}
for some~$\gamma_1\in[0,1]$, and~$0<\gamma_2$.
Therefore if for any~$i \in \mathbb{N}$, there exist a slice,~$M_j$, in the set,~$J_2$, such that
\begin{eqnarray}\label{64-0}
\|M_j\|_\infty\leq 1-\alpha_j \leq e^{(-\gamma_2i^{-\gamma_1})},
\end{eqnarray}
we get
\begin{eqnarray}\label{57}
\lim_{t \rightarrow \infty}\prod_{j=0}^t(1-\alpha_j) = \underbrace{\prod_{j\in J_2}(1-\alpha_j)}_{=0} \prod_{j\notin J_2}(1-\alpha_j)=0,
\end{eqnarray}
and absolutely asymptotic stability follows. By substituting~$\alpha_j$ from Eq.~\eqref{42} in the left hand side of Eq.~\eqref{64-0}, we get
\begin{eqnarray}\nonumber
1 - {\beta_1}^{{{\vert M_j \vert}} - 1}({1-\beta}_2) &\leq&  e^{(-\gamma_2i^{-\gamma_1})},
\end{eqnarray}
which leads to 
\begin{eqnarray}\label{65}
\ln\left(\frac{1 - e^{(-\gamma_2i^{-\gamma_1})}}{1-\beta_2}\right) &\leq&  (\vert M_j \vert - 1)\ln\beta_1.\label{58}
\end{eqnarray}
Since $\beta_1< 1$, $\ln\beta_1$ is negative and dividing both sides of~Eq.~\eqref{58} by a negative number changes the inequality, i.e.
\begin{eqnarray}\label{c3}
\vert M_j \vert \leq \frac{1}{\ln\left({\beta_1}\right)}\ln\left(\frac{1 - e^{(-\gamma_2i^{-\gamma_1})}}{1-\beta_2}\right)+1. 
\end{eqnarray}
Now note that the first~$\ln$ is negative; for the bound to remain meaningful, the second~$\ln$ must also be negative that requires 
\begin{align*}
1 - e^{(-\gamma_2i^{-\gamma_1})} &< 1-\beta_2,\\ \mbox{ or, } \beta_2 &< e^{(-\gamma_2i^{-\gamma_1})}.
\end{align*}
It can be verified that the above inequality is true for any value of $\beta_2\in[0,1)$ by choosing an appropriate $0<\gamma_2$. 

To conclude, we note that if the slices are such that there exists a slice with length following Eq.~\eqref{c3} for every~$i\in\mbb{N}$, not necessarily in any order, the infinite product of such slices goes to a zero matrix. Finally, from Eq.~\eqref{62}
\begin{eqnarray}
\lim_{t \rightarrow \infty}{\Vert {\bf{y}}(t+1)\Vert}_{\infty}=0,
\end{eqnarray}
which completes the proof in this case.
\end{proof}

In the following, we shed some light on case (iii) and Eq.~\eqref{c3}. First, note that Eq.~\eqref{c3} does not require the slice indices to be~$i$. In other words, the slice lengths are not growing as~$i$ increases and slices satisfying Eq.~\eqref{c3} may appear in any order. For the next argument, note that the RHS of Eq.~\eqref{c3} goes to~$+\infty$ as $i\ra\infty$; because $e^{(-\gamma_2 i^{-\gamma_1})}$ goes to~$1$. A longer slice length can be related to a slow information propagation in the network. Eq.~\eqref{c3} further shows that LTV stability does not require bounded slice lengths (as in cases (i) and (ii)); the slice lengths can be unbounded as long as a well-behaved sub-sequence of slices exist (in any order) whose lengths do not increase faster than the upper bound in Eq.~\eqref{c3}. 

Next note that $\gamma_1=1$ is a valid choice, which corresponds to the fastest growing exponential,~$e^{(-\gamma_2 i^{-1})}$, whose infinite product is $0$. 
This means that only a sub-sequence of slices need to behave such that their behavior is not worse that~${e^{-\gamma_2i^{-1}}}$, in any order. We may write this requirement as 
\begin{eqnarray*}
\mbb{P}\left(M_j \mbox{ exists for some~$j$ such that }\|M_j\|_\infty\leq {e^{-\gamma_2i^{-1}}}\right)=1,
\end{eqnarray*}
$\forall i\geq1$ and~$0<\gamma_2$, where $\mbb{P}$ denotes the probability of the corresponding event. On the other hand, by choosing $\gamma_1=0$ the upper bound on the slice length in \textit{case (iii)} becomes a constant. Hence, the first two cases are in fact special cases of this bound if we set $N$ and $N_1$ as
\begin{eqnarray*}
\frac{1}{\ln\left({\beta_1}\right)}\ln\left(\frac{1 - e^{(-\gamma_2)}}{1-\beta_2}\right)+1.
\end{eqnarray*}

\section{Distributed Dynamic Fusion}\label{app}
We now show the relevance of the results in Sections~\ref{IP} and~\ref{stability} to Distributed Dynamic Fusion (DDF) that we briefly introduced in Sections~\ref{intro} and~\ref{PF}. In order to explain the DDF, let us first consider LTI fusion of the form:~$\mb{x}_{k+1} = P\mb{x}_k+B\mb{u}_k$, where~$\mb{x}_k\in\mbb{R}^n$ is the vector of~$n$ sensor states and~$\mb{u}_k\in\mbb{R}^s$ is the vector~$s$ anchor states. The matrix~$P$ collects the sensor-to-sensor coefficients while the matrix~$B$ collects the sensor-to-anchor coefficients. It is clear that if the spectral radius of~$P$ is subunit, the sensor states,~$\mb{x}_k$, forget the initial states,~$\mb{x}_0$, and converge to the convolution between the system's impulse response~$(I-P)^{-1}B$ and the anchor states,~$\mb{u}_k$. When the system matrices are designed such that the concatenated matrix,~$[P~B]$, is row-stochastic, then a subunit spectral radius, $\rho(P)<1$, can be guaranteed if each sensor has a path from at least one anchor. With these conditions, the constant system matrix, $P_k=P,B_k=B,$ at each~$k$, ensure that the information travels from the anchors to each sensor infinitely often and in an exact same fashion at each~$k$. 

In the context of DDF, the system matrices,~$P_k$ and~$B_k$, are a function of the network configuration, and the LTI information flow cannot be guaranteed for any~$P_k,B_k$. In fact, there can be situations when every (mobile) sensor has no neighbors resulting into~$\mb{x}_{k+1}=\mb{x}_k$, i.e.~$P_k=I_n$ and $B_k=\mb{0}_{n\times s}$. The construct of slices ensures that the aforementioned information flow (each sensor having a path from at least one anchor, possibly in arbitrarily different ways) is guaranteed over each slice. In this sense, the \emph{success} regarded (earlier in Section~\ref{IP}) as having a sub-stochastic row, say~$i$, in some arbitrary~$P_k$ in an arbitrary slice,~$M_j$, is equivalent to saying that sensor~$i$ is now \emph{informed}, i.e. sensor~$i$'s current state is now influenced by the anchor(s) in the~$j$th slice. Having~$n$ such (distinct) successes means that in the~$j$th slice, each sensor is now informed. Having infinite such slices means that each sensor becomes informed infinitely often; compare this with the LTI case when all of the~$n$ sensors become informed at each~$k$ and this process repeats infinitely often over~$k$.

The results in Sections~\ref{IP} and~\ref{stability} can also be cast in the context of the DDF discussion above. Lemma~\ref{lem1} states that for each sensor to become informed in every slice, one sensor has to directly receive information from an anchor. Lemma~\ref{lem2} shows how an \emph{uninformed} sensor, say~$i$, may become informed in each slice: either via an anchor, i.e. a sub-stochastic update at row~$i$, or via an (already) informed sensor, i.e. a stochastic update at row~$i$ but with a non-zero weight on any informed sensor. Subsequently, Lemma~\ref{lem3} shows that a sufficient condition for an informed sensor to remain informed in each slice is to assign a non-zero self-weight, i.e. Assumption~{\bf A1}; this makes sense as an informed sensor may become uninformed by updating only with uninformed sensors in its neighborhood.

Lemmas~\ref{lm4}--\ref{lm7} further quantify the rate at which each slice is completed, i.e. the rate at which each sensor becomes informed in any given slice. The upper bound given in Eq.~\eqref{41} is the worst case for a slice, as this case is likely to happen given the possibility of any arbitrary network configuration. Drawing an analogy with the LTI scenario, the slices have to be completed infinitely often and thus, Theorem~\ref{thm0} considers all of the slices and provides different ways to guarantee an infinite number of slices. We emphasize that information diffusion in the network can actually deteriorate (not necessarily in an order) and Theorem~\ref{thm0} further provides the ``rate'' at which well-behaved network configurations must occur. Since the discussion so far mostly caters to ``forgetting the sensor initial conditions'', i.e. the asymptotic stability, we now show the steady-state of the DDF for a particular application of interest. 

\subsection{Dynamic Leader-Follower}
In this setup, the goal for the entire sensor network is to converge to the state of one anchor (multiple anchors and converging to their linear-convex combination may also be considered, see e.g.~\cite{khan2009distributed,khan2010diland}). Let~${\bf{1}}_{n}$ be the~$n \times 1$ column vector of~$n~1$’s, and~$u$ be the scalar state of the (single) anchor, which is known and does not change over time. The leader-follower algorithm requires~$\lim_{k \rightarrow \infty}{\bf{x}}(k) \nonumber = {\bf{1}}_{n}{{u}}$, where $\mb{x}_k\in\mbb{R}^n$ collects the states of all of the sensors. Since this is a dynamic algorithm with mobile sensors, a sensor may not find any neighbor at many time instants. When a sensor does find neighbors, an anchor may not be one of them. Furthermore, if a sensor has the anchor as a neighbor at some time, this anchor may not be a neighbor going forward because the nodes are mobile. We now use the results from Section~\ref{stability}, to provide the asymptotic stability analysis of the dynamic leader-follower algorithm. 
\begin{thm}
Consider a network of~$n$ sensors and~$s=1$ anchor with the following update:
\begin{equation}\label{71}
{\bf{x}}(k+1)={{P}}_k{\bf{x}}(k)+{{B}}_k{{u}},\qquad k \geq 0,
\end{equation}
in which~$u$ is the state of the anchor. With assumption {\bf{A0-A2}}, in addition to the following
\begin{equation}\label{lf}
\sum_j ({P}_{k})_{i,j}+({B}_{k})_{i,j}=1, \qquad \forall k,
\end{equation}
all sensors (asymptotically) converge to the anchor state.
\end{thm}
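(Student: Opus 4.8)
The plan is to reduce the inhomogeneous leader-follower dynamics in Eq.~\eqref{71} to the homogeneous LTV system already analyzed in Theorem~\ref{thm0}, by tracking the error between the sensor states and the anchor state. First I would introduce the error vector
\[
\mb{e}(k)=\mb{x}(k)-\mb{1}_n u,
\]
so that the desired conclusion $\lim_{k\ra\infty}\mb{x}(k)=\mb{1}_n u$ becomes equivalent to $\lim_{k\ra\infty}\mb{e}(k)=\mb{0}_n$.

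The crux is an algebraic step that exploits the row-stochasticity of the concatenated matrix $[\,P_k~B_k\,]$ imposed by Eq.~\eqref{lf}. Since $s=1$, condition~\eqref{lf} reads $\sum_{j}(P_k)_{i,j}+(B_k)_{i,1}=1$ for every row $i$, which in vector form is $P_k\mb{1}_n+B_k=\mb{1}_n$, i.e. $B_k=(I_n-P_k)\mb{1}_n$. Substituting $B_k u=(I_n-P_k)\mb{1}_n u$ into Eq.~\eqref{71} and subtracting $\mb{1}_n u$ from both sides, I would obtain
\[
\mb{e}(k+1)=P_k\mb{x}(k)+(I_n-P_k)\mb{1}_n u-\mb{1}_n u=P_k\big(\mb{x}(k)-\mb{1}_n u\big)=P_k\,\mb{e}(k).
\]
This simultaneously shows that $\mb{1}_n u$ is an equilibrium of~\eqref{71} and, more importantly, that the error obeys exactly the homogeneous recursion $\mb{e}(k+1)=P_k\mb{e}(k)$ studied in Section~\ref{stability}.

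Having decoupled the anchor contribution, I would then invoke Theorem~\ref{thm0} directly. Under Assumptions~{\bf A0--A2} the system matrices $P_k$ admit the slice decomposition of Section~\ref{IP}, each slice has subunit infinity norm by Lemma~\ref{lm7}, and whenever the slice lengths obey one of the three conditions of Theorem~\ref{thm0} the homogeneous system is absolutely asymptotically stable. Consequently $\lim_{k\ra\infty}\mb{e}(k)=\mb{0}_n$, and therefore $\lim_{k\ra\infty}\mb{x}(k)=\mb{1}_n u$, i.e. every sensor converges to the anchor state.

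I expect the reduction to the homogeneous system to be the only real content of the argument, and it is clean: the one delicate point is that the exact cancellation $B_k u-\mb{1}_n u=-P_k\mb{1}_n u$ must hold at \emph{every} index $k$, which is precisely why Eq.~\eqref{lf} is required for all $k$ rather than only asymptotically. Beyond this, no genuine obstacle remains, since all the convergence machinery for the homogeneous error system has already been developed in Sections~\ref{IP}--\ref{stability}; the dynamic leader-follower guarantee is essentially a corollary of Theorem~\ref{thm0} once the affine anchor term is absorbed into the equilibrium $\mb{1}_n u$.
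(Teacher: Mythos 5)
Your proposal is correct, but it takes a genuinely different route from the paper. The paper works in slice coordinates, writes the full variation-of-parameters solution ${\bf{y}}(t+1)=(M_t\cdots M_0){\bf{y}}(0)+\sum_m(\prod_j M_{t-j+1})N_{t-m}u$, argues that ${\bf{y}}(t)$ converges to a fixed point ${\bf{y}}^*=(I_n-M_t)^{-1}N_t u$, and then verifies $(I_n-M_t)^{-1}N_t=\mb{1}_n$ by a telescoping expansion of $M_t\mb{1}_n+N_t$ that repeatedly applies $P_j\mb{1}_n+B_j=\mb{1}_n$ at each sub-stochastic update inside the slice. You instead shift coordinates to the error $\mb{e}(k)=\mb{x}(k)-\mb{1}_n u$, use Eq.~\eqref{lf} once in the form $B_k=(I_n-P_k)\mb{1}_n$ to show $\mb{e}(k+1)=P_k\mb{e}(k)$, and invoke Theorem~\ref{thm0} on the homogeneous error system. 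Your argument is shorter, avoids the entire telescoping computation, sidesteps the slightly delicate fixed-point step in the paper (whose equation ${\bf{y}}^*=M_t{\bf{y}}^*+N_tu$ involves time-varying $M_t,N_t$), and is more explicit that convergence ultimately rests on one of the slice-length conditions of Theorem~\ref{thm0} rather than merely on $\|M_t\|_\infty<1$ for each $t$. What the paper's longer route buys is an explicit formula for the steady state, $(I_n-M_t)^{-1}N_t u$, which extends to the multi-anchor case where sensors converge to a linear-convex combination of anchor states; your equilibrium-shift trick exploits the fact that $\mb{1}_n u$ is an exact common equilibrium, which is special to the single-anchor (or identical-target) setting.
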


\begin{proof}
It can be verified that Eq.~\eqref{71} results into
\begin{eqnarray*}
{\bf{x}}(k+1)= (P_k \ldots P_0){\bf{x}}(0) + \sum\limits_{m=0}^{k}\left(\prod_{j=1}^m P_{k-j+1}\right)B_{k-m}{{u}}.\\
\end{eqnarray*}
With the slice notation, we have
\begin{equation}\label{72}
{\bf{y}}(t+1)={{M}}_t{\bf{y}}(t)+{{N}}_t{{u}},\qquad k \geq 0,
\end{equation}
where~${\bf{y}}(0)={\bf{x}}(0),M_0 = P_{\vert M_0 \vert -1} \ldots P_0,$ and
\begin{eqnarray}
M_t &=& P_{\left(\sum\limits_{i=0}^{t}{\vert M_i \vert}\right) - 1} \ldots P_{\left(\sum\limits_{i=0}^{t-1}{\vert M_i \vert}\right)},~ t >0,\label{73}\\
N_t&=&\sum\limits_{m=0}^{{\vert M_t \vert - 1}}\left(\prod_{j=1}^m P_{{\vert M_t \vert}-j}\right)B_{{\vert M_t \vert - 1}-m}\label{74}.
\end{eqnarray}
In addition, we have
\begin{eqnarray*}
{\bf{y}}(t+1)=(M_t \ldots M_0){\bf{y}}(0) + \sum\limits_{m=0}^{t}\left(\prod_{j=1}^m M_{t-j+1}\right)N_{t-m}{{u}}.
\end{eqnarray*}
Since~$u$ is a constant, and
\begin{eqnarray}\label{rho}
\rho(M_t)\leq {\Vert M_t \Vert}_{\infty} < 1, \qquad \forall t,
\end{eqnarray}
as~$t \rightarrow \infty$ in Eq.~\eqref{72},~${\bf{y}}_{t+1}$ converges to a limit,~${\bf{y}}^{*}$. This limit is further unique because it is a fixed point of a linear iteration with bounded matrices,~\cite{tsit_book}. Therefore,
\begin{equation*}
\lim\limits_{t \rightarrow \infty} {\bf{y}}_{t+1}={\bf{y}}^{*},
\end{equation*}
and we have
\begin{eqnarray}
{\bf{y}}^{*} = M_t {\bf{y}}^{*} + N_t {{u}} \rightarrow (I_n-M_t){\bf{y}}^{*} = N_t {{u}},
\end{eqnarray}
where~${\bf{y}}^{*}={\bf{x}}^{*}$ is the limiting states of the sensors. Thus,
\begin{eqnarray}
{\bf{x}}^{*} = {(I_n-M_t)}^{-1} N_t {{u}},
\end{eqnarray}
for which we used the fact that~${(I-M_t)}$ is invertible due to~Eq.~\eqref{rho}. In order to show that the limiting states of the sensors are indeed the anchor state, we require
\begin{eqnarray}\label{80}
{(I_n-M_t)}^{-1} N_t = \mb{1}_n~~\Rightarrow~~M_t{\bf{1}}_n+N_t = {\bf{1}}_n.
\end{eqnarray}
Note that~$N_t$ is a column vector since there is only one anchor. Before we proceed, for the sake of simplicity let us represent any arbitrary $t$th slice as:
\begin{eqnarray*}
M_t &\triangleq& P_{T} P_{{T}-1} \ldots P_{0}, \qquad \vert M_t \vert =T+1.
\end{eqnarray*}
By substituting~$M_t$ and~$N_t$ from Eqs.~\eqref{73} and~\eqref{74} in Eq.~\eqref{80}, we need to show that
\begin{eqnarray}
( P_{T} \ldots P_{{0}}){\bf{1}}_n+
\sum\limits_{m=0}^{T}\left(\prod_{j=1}^m P_{T+1-j}\right)B_{T-m} ={\bf{1}}_n.
\end{eqnarray}
By expanding the left hand side of the above, we have
\begin{eqnarray}
(P_{T} P_{{T}-1}\ldots P_{{0}}){\bf{1}}_n&+&(P_{T} P_{{T}-1}\ldots P_{{1}})B_{0}\nonumber\\
&+&(P_{T} P_{{T}-1}\ldots P_{{2}})B_{1}\nonumber\\
&\vdots&\nonumber\\
&+&(P_{T} P_{{T}-1})B_{T-2}\nonumber\\
&+&(P_{T})B_{T-1}\nonumber\\
&+&(B_{T}).\label{82}
\end{eqnarray}
The first line of the above expression can be simplified as
\begin{eqnarray}\label{7-2}
(P_{T} P_{{T}-1}\ldots P_{{1}})( P_0{{\bf{1}}_n}+B_{0}),
\end{eqnarray}
in which~$B_{0} \neq 0$ is a~$n \times 1$ vector corresponding to the first sub-stochastic update at the beginning of the slice,~$M_t$. Also,~$B_{0}$ has only one non-zero, say~$\alpha_i$, at the~$i$th position if sensor~$i$ updates with the anchor at the beginning of the slice,~$M_t$. From Eq.~\eqref{lf}, it can be verified that
\begin{eqnarray}
P_{0}{\bf{1}}_n+B_{0}={\bf{1}}_n.
\end{eqnarray}
Therefore, Eq.~\eqref{82} reduces to
\begin{eqnarray}\label{85}
(P_{T} P_{{T}-1}\ldots P_{{1}}){\bf{1}}_n+(P_{T}\ldots P_{{2}})B_{{1}}+ \ldots+B_{T}.
\end{eqnarray}
After the first (sub-stochastic) update, each~$B_{j}, (1\leq j\leq T)$, has exactly one non-zero in case of sub-stochastic updates and all zeros otherwise. The procedure continues in a similar way for any sub-stochastic update, i.e. update with the anchor. Let us consider now the alternate case where the update is stochastic, i.e. without the anchor and with some neighboring sensors. Suppose~$B_{c}$ is the next sub-stochastic update, and we have~$B_{j}=\mb{0}_{n-1}, (1 \leq j < c)$. Eq.~\eqref{85} then reduces to
\begin{eqnarray}
(P_{T}\ldots P_{{c}+1})(P_{c} P_{{c}-1}\ldots P_{{1}}
{\bf{1}}_n+B_{{c}})+\ldots+(B_{T}).\label{86}
\end{eqnarray}
Since between~$P_1$ and~$P_c$ there is no sub-stochastic update,~$P_{{c}-1}\ldots P_{{1}}
{\bf{1}}_n={\bf{1}}_n$, and we can rewrite Eq.~\eqref{86} as
\begin{eqnarray}
(P_{T}\ldots P_{{c}+1})(P_{c}
{\bf{1}}_n+B_{{c}})+\ldots+(B_{T}),\label{87}
\end{eqnarray}
and the procedure continues as before (note the similarity between Eq.~\eqref{7-2}, and the first term on the left hand side of Eq.~\eqref{87}). Finally, 
\begin{eqnarray}
( P_{T} \ldots P_{{0}}){\bf{1}}_n&+&
\sum\limits_{m=0}^{M}(\prod_{j=1}^m P_{T+1-j})B_{T-m}\nonumber\\
&=&P_{{T}}{\bf{1}}_n+B_{T}\nonumber\\
&=&{\bf{1}}_n,
\end{eqnarray}
which leads to $\lim\limits_{k \rightarrow \infty}{\bf{x}}(k)= {\bf{x}}^{*}={{u}}$.
\end{proof}

\section{Illustrative Example}\label{example}
In this section, we provide a few numerical examples to illustrate the concepts described in this paper. We show the product of~$4 \times 4$ (sub-) stochastic matrices. Assumptions {\bf{A0-A2}} are satisfied with~$\beta_1=0.05$,~$\beta_2=0.7$. At each iteration, the update matrix, which is left multiplied to the product of past matrices, randomly takes one of the following forms: 
(i) identity matrix except for the~$i$th~($1\leq i \leq4$) row, which is replaced by a stochastic row vector; or,
(ii) identity matrix except for the~$i$th~($1\leq i \leq4$) row, which is replaced by a sub-stochastic row vector; or, 
(iii) a~$4 \times 4$ identity matrix,~$I_4$. 
Fig.~\ref{f2} (Left) shows the infinity norm and the spectral radius of the product of system matrices. In addition, the infinity norm of the product of slices are illustrated for comparison. Slice lengths, at the termination of each slice and over the slice index,~$t$, are shown in Fig.~\ref{f2} (Right). The minimum slice length is $5$, and $15$ slices are completed within $200$ iterations of this simulation. Note that the infinity norm of the slices is the only (strictly) monotonically decreasing curve. 
\begin{figure}[!h]
\centering
\subfigure{\includegraphics[width=1.72in,height=1.70in]{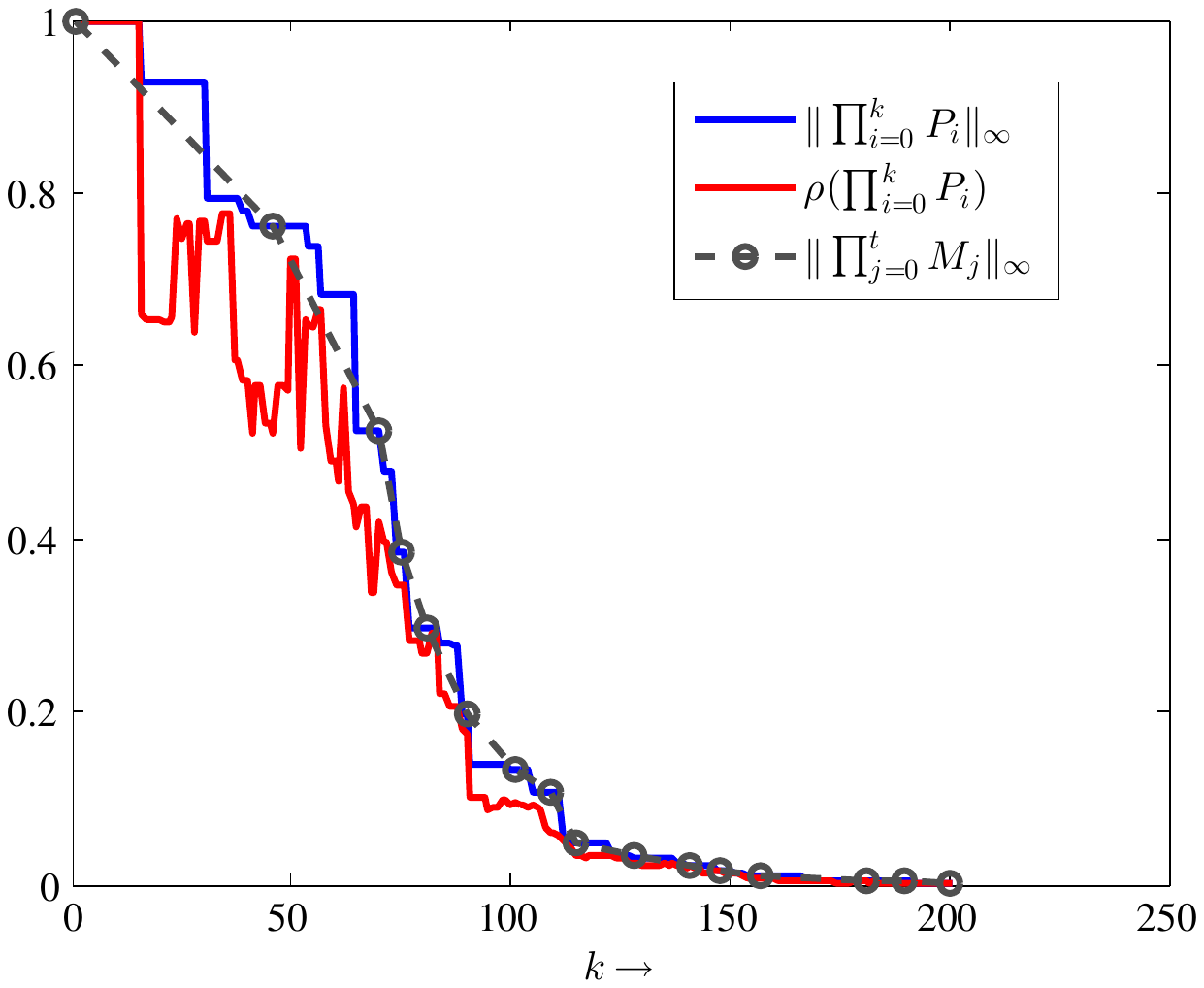}}
\subfigure{\includegraphics[width=1.72in,height=1.70in]{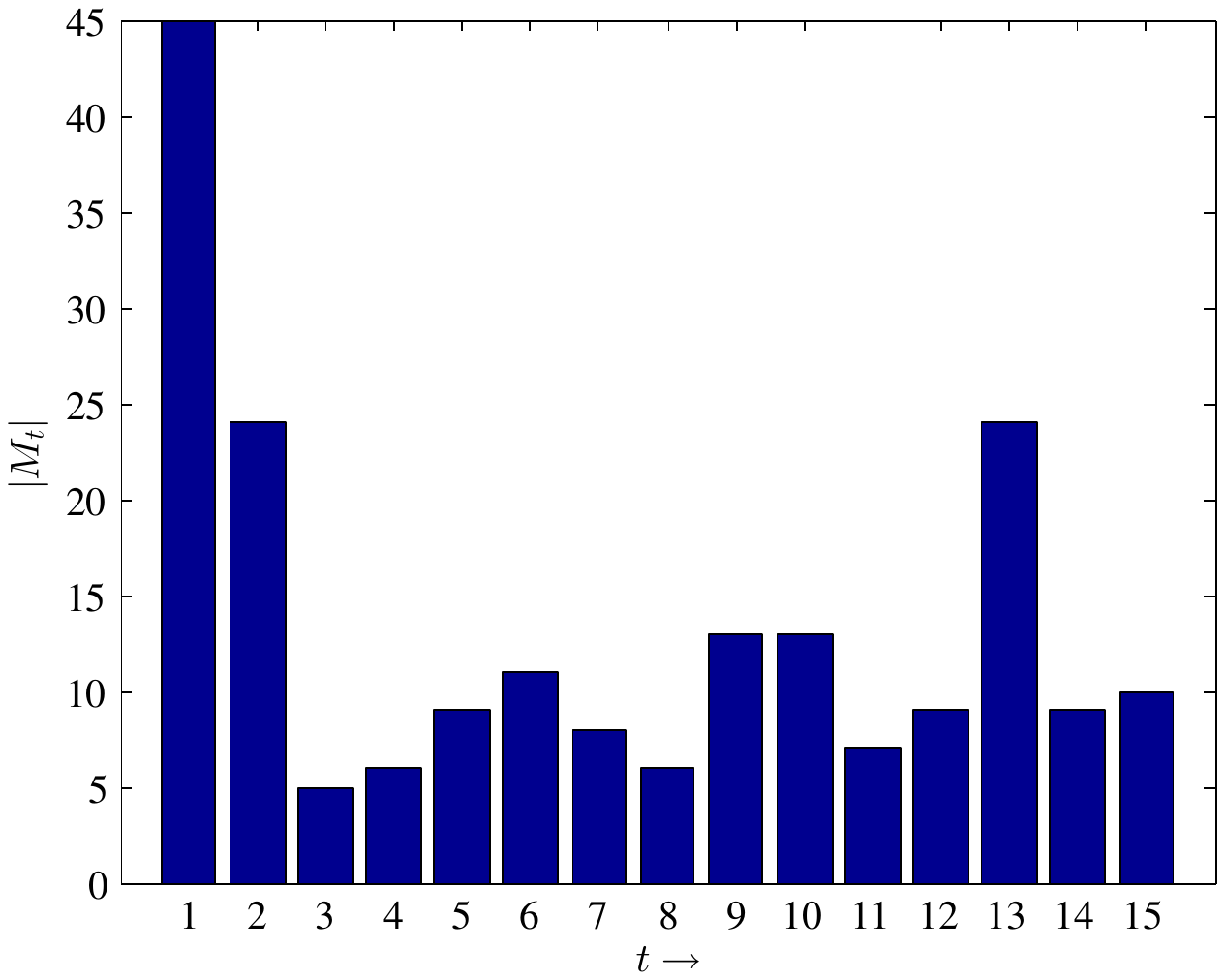}}
\caption{(Left) Spectral Radius vs. Infinity Norm. (Right) Slice lengths.}
\label{f2}
\end{figure}

In Figs.~\ref{f3} and~\ref{f4}, we illustrate the dynamic leader-follower algorithm. Fig.~\ref{f3} shows the network configuration with~$n=4$ mobile sensors, where sensor~$i$ is restricted to move in the region,~$R_i$, marked as the corresponding disk. The anchor only moves in the region,~$R_0$, and the random trajectories taken by each node are marked; shown only over the first~$40$ iterations to maintain visual clarity as random trajectories clutter in a short time. We choose~$1.5$ times the radius of the innermost circle as the communication radius; note that only sensors,~$1,2$ in regions~$R_1,R_2$, may be able to talk to the anchor given this communication radius and depending on the corresponding node locations within the respective regions,~$R_0,R_1,R_2$, see the top-left figure. In the top-right figure, we show a time instant when no sensor is able to communicate with any other node; thus resulting in an identity system matrix. The bottom-left figure shows the case when only one sensor,~$1$ in region~$R_1$, communicates with the anchor; thus resulting in a sub-stochastic system matrix. Finally, the bottom-right figure shows the stochastic update when sensor~$3$, in region~$R_3$, is able to communicate with sensor~$4$, in region~$R_4$. Clearly, we have chosen this network configuration, (random) motion model, and communication radius for visual convenience; the setup is applicable to any scenario where the communication radius and random motion models ensure that the information (possibly over a longer time window) travels from the anchor to each mobile sensor. 
\begin{figure}[!h]
\centering
\subfigure{\includegraphics[width=1.72in]{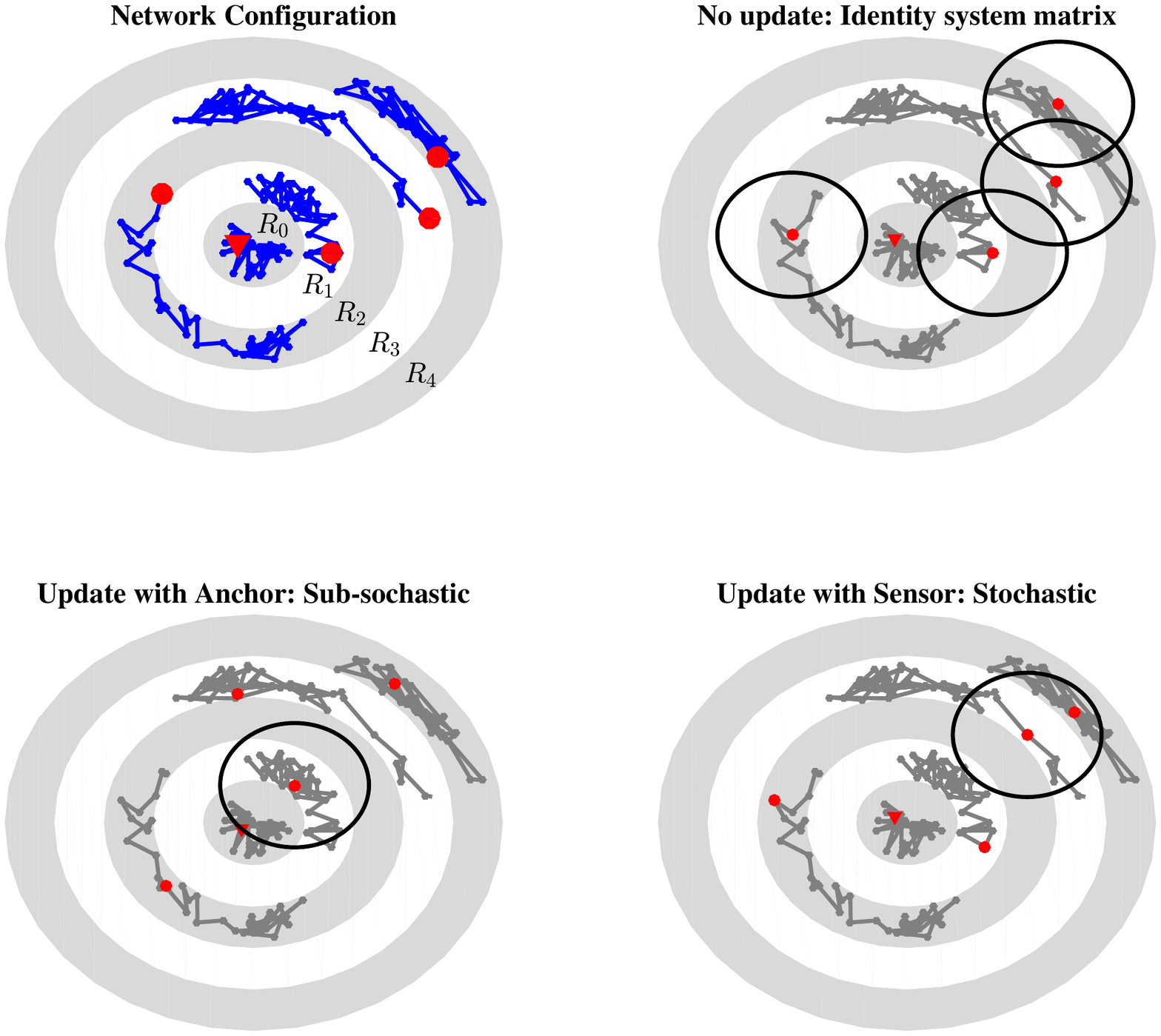}}
\subfigure{\includegraphics[width=1.72in]{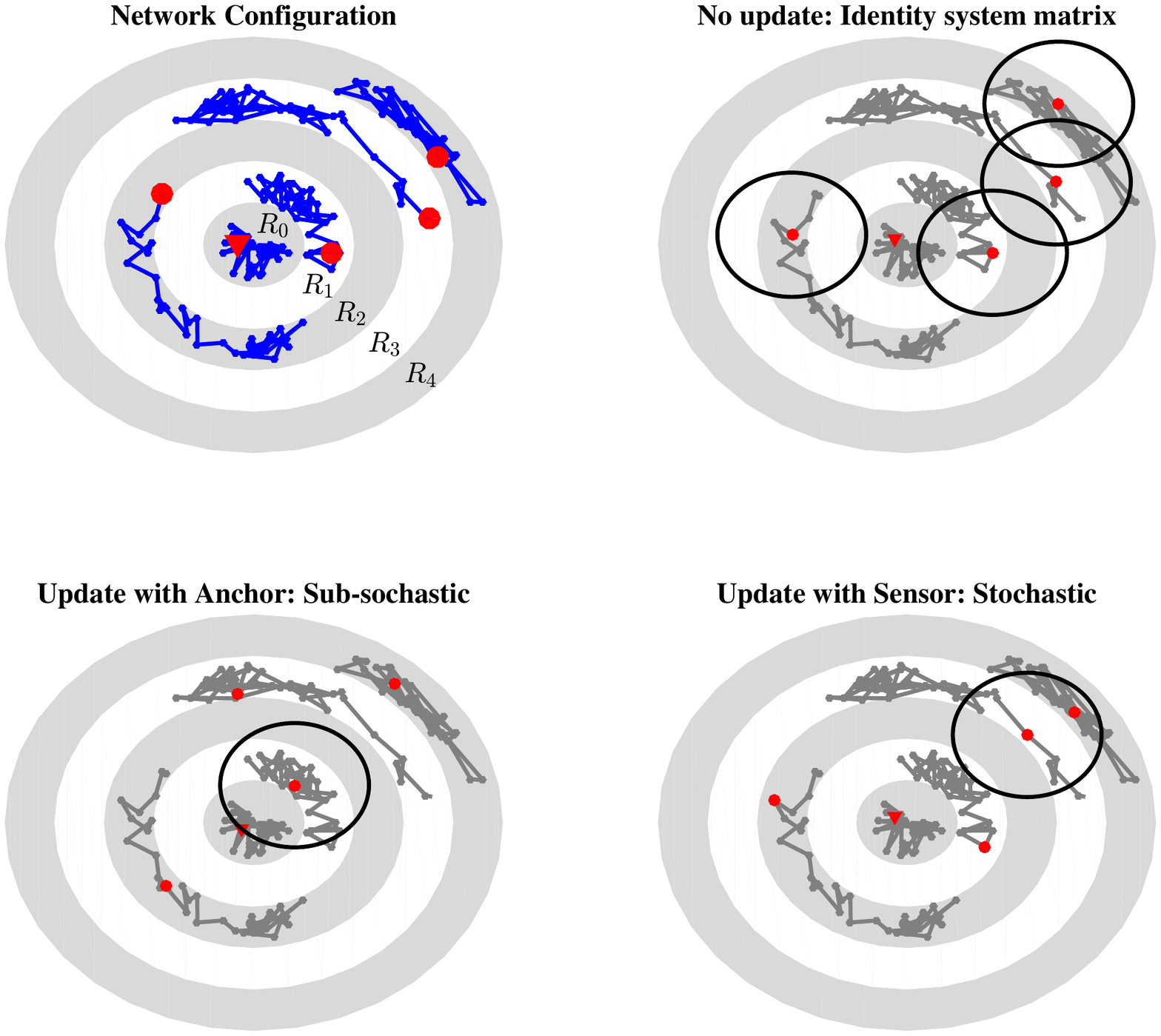}}
\subfigure{\includegraphics[width=1.72in]{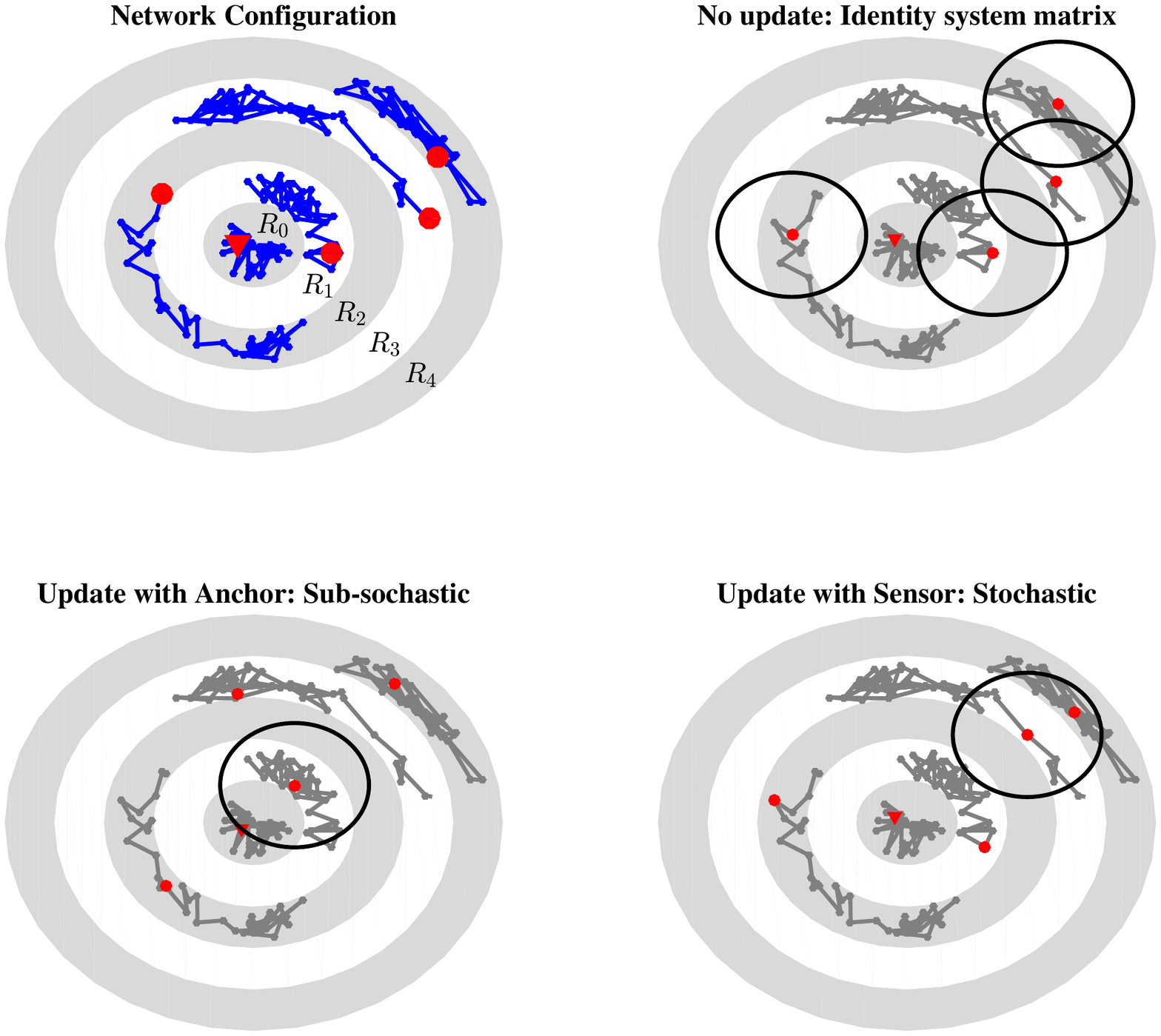}}
\subfigure{\includegraphics[width=1.72in]{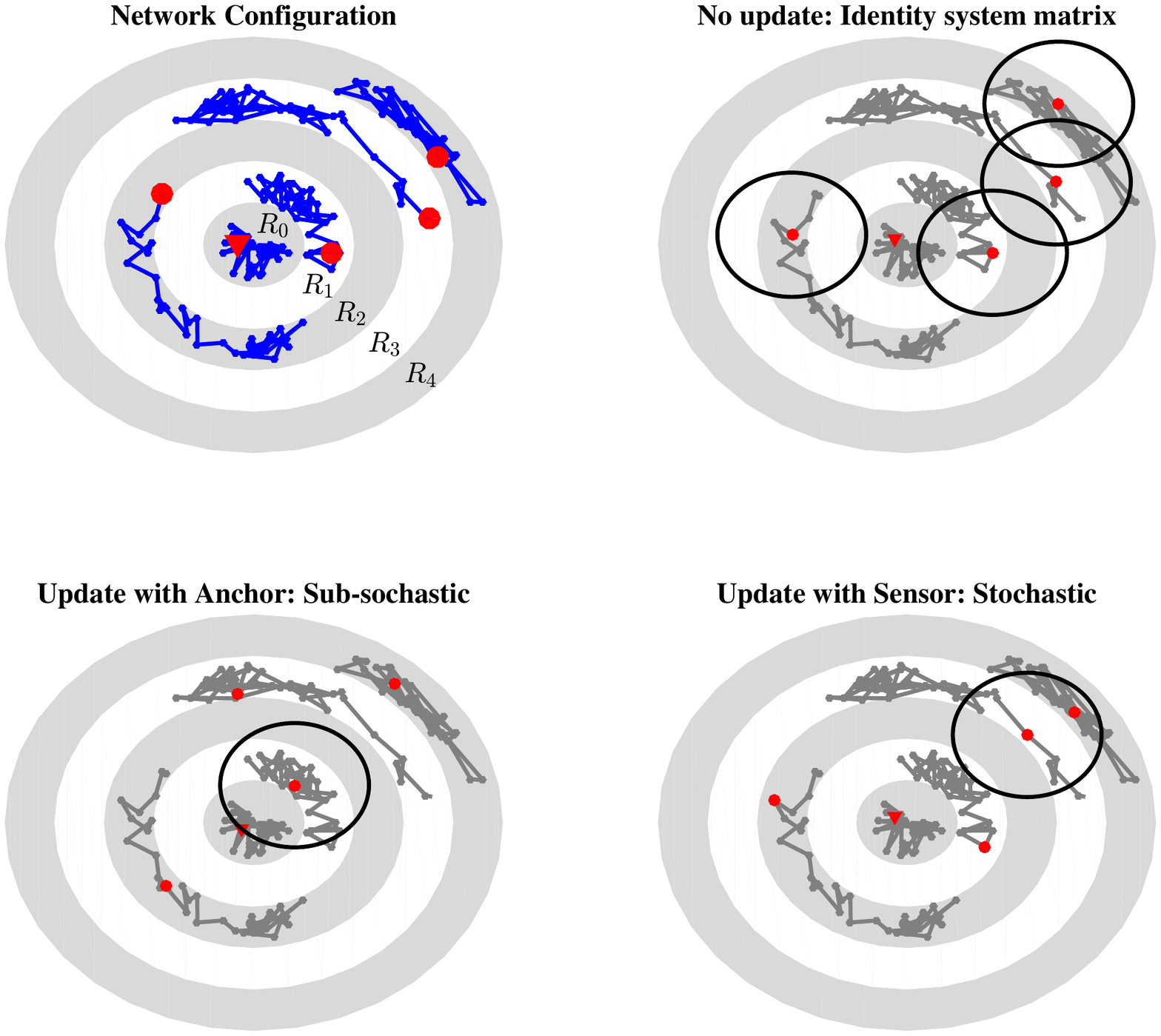}}
\caption{Dynamic leader-follower: Mobile sensors, red circles, and the anchor, red triangle, follow a restricted motion in their corresponding disks. The blue (and gray) lines show the nodal trajectories whereas the circles around the sensors show their communication radii.}
\label{f3}
\end{figure}

Finally, Fig.~\ref{f4} shows the sensor states with the anchor state chosen at~$u=3$. We note that the sensors closer to the anchor converge faster to the anchor state as compared to the farther sensors. This is because of the information flow in this particular scenario. That a sensor, whose state is closer to the anchor state, does not lose this information is ensured by the conditions established on the sensor weights. In particular, an informed sensor does not lose its (partial) knowledge when updating only with neighboring sensors because: (i) each sensor assigns some weight to its past information; and (ii) no sensor is allowed to assign an arbitrarily large weight on any neighboring sensor. We emphasize that this simple illustration is significantly insightful and demonstrates the key concepts of the theoretical results described in this paper. Clearly, the setup can be extended to arbitrary motion models, network configurations, and large networks. 
\begin{figure}[!h]
\centering
\includegraphics[width=2.65in]{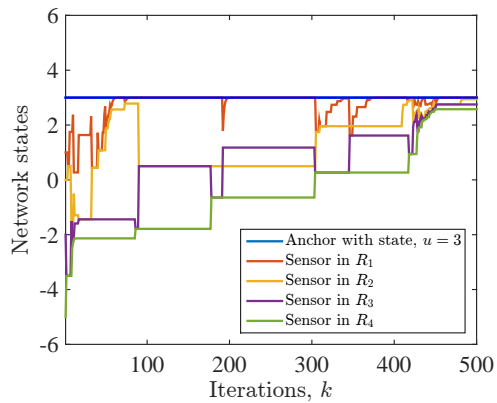}
\caption{Dynamic leader-follower: Sensor and anchor states.}
\label{f4}
\end{figure}

\section{Conclusion}\label{conc}
In this paper, we study asymptotic stability of Linear Time-Varying (LTV) systems with (sub-) stochastic system matrices. Motivated by applications in distributed dynamic fusion (DDF), we design the conditions on the system matrices that lead to asymptotic stability of such dynamics. Rather than exploring the joint spectral radius of the (infinite) set of system matrices, we partition them into non-overlapping slices, such that each slice has a subunit infinity norm, and slices cover the entire sequence of the system matrices. We use infinity norm to characterize the asymptotic stability and provide upper bounds on the infinity norm of each slice as a function of the slice length and some additional system parameters. We show that asymptotic stability is guaranteed not only in the trivial case where all (or an infinite subset) of slices have a bounded length, but also if there exist an infinite subset of slices whose (unbounded) lengths do not grow faster than a particular exponential growth. We apply these theoretical findings to the dynamic leader-follower algorithm and establish the conditions under which each sensor converges to the state of the anchor. These concepts are further illustrated with insightful examples.

\bibliographystyle{IEEEtran}
\bibliography{bibliography}
\end{document}